\newtheorem{theorem}{Theorem}
\newtheorem{proposition}{Proposition}
\newtheorem{corollary}{Corollary}
\newcommand{\ket}[1]{|#1\rangle}
\newcommand{\bra}[1]{\langle #1|}
\newcommand{\bracket}[2]{\langle #1|#2\rangle}
\newcommand{\ketbra}[2]{|#1\rangle\langle #2|}
\newcommand{\tr}[0]{\textnormal{Tr}}
\begin{document}
\title{Conclusive exclusion of quantum states}
\author{Somshubhro Bandyopadhyay}
\affiliation{Department of Physics and Center for Astroparticle Physics and Space Science, Bose Institute, Block EN, Sector V, Bidhan Nagar, Kolkata 700091, India}
\author{Rahul Jain}
\affiliation{Department of Computer Science and Centre for Quantum Technologies, National University of Singapore, Singapore 119615}
\author{Jonathan Oppenheim}
\affiliation{Department of Computer Science and Centre for Quantum Technologies, National University of Singapore, Singapore 119615}
\affiliation{Department of Physics and Astronomy, University College London, Gower Street, London WC1E 6BT, United Kingdom}
\author{Christopher Perry}
\affiliation{Department of Physics and Astronomy, University College London, Gower Street, London WC1E 6BT, United Kingdom}

\begin{abstract}
In the task of quantum state exclusion we consider a quantum system, prepared in a state chosen from a known set. The aim is to perform a measurement on the system which can conclusively rule that a subset of the possible preparation procedures can not have taken place. We ask what conditions the set of states must obey in order for this to be possible and how well we can complete the task when it is not. The task of quantum state discrimination forms a subclass of this set of problems. Within this paper we formulate the general problem as a Semidefinite Program (SDP), enabling us to derive sufficient and necessary conditions for a measurement to be optimal. Furthermore, we obtain a necessary condition on the set of states for exclusion to be achievable with certainty and give a construction for a lower bound on the probability of error. This task of conclusively excluding states has gained importance in the context of the foundations of quantum mechanics due to a result of Pusey, Barrett and Rudolph (PBR). Motivated by this, we use our SDP to derive a bound on how well a class of hidden variable models can perform at a particular task, proving an analogue of Tsirelson's bound for the PBR experiment and the optimality of a measurement given by PBR in the process. We also introduce variations of conclusive exclusion, including unambiguous state exclusion, and state exclusion with worst case error.
\end{abstract}

\maketitle

\section{Introduction}
Suppose we are given a single-shot device, guaranteed to prepare a system in a quantum state chosen at random from a finite set of $k$ known states. In the  quantum state discrimination problem, we would attempt to identify the state that has been prepared. It is a well known result \cite{Nielsen2010} that this can be done with certainty if and only if all of the states in the set of preparations are orthogonal to one another. By allowing inconclusive measurement outcomes \cite{Ivanovic1987, Dieks1988, Peres1988} or accepting some error probability \cite{Helstrom1976,Holevo1973, Yuen1975}, strategies can be devised to tackle the problem of discriminating between non-orthogonal states. For a recent review of quantum state discrimination, see \cite{Barnett2009}. What however, can we deduce about the prepared state with certainty?

Through state discrimination we effectively attempt to increase our knowledge of the system so that we progress from knowing it is one of $k$ possibilities to knowing it is one particular state. We reduce the size of the set of possible preparations that could have occurred from $k$ to $1$. A related, and less ambitious task, would be to exclude $m$ preparations from the set, reducing the size of the set of potential states from $k$ to $k-m$. If we rule out the $m$ states with certainty we say that they have been conclusively excluded. Conclusive exclusion of a single state is not only interesting from the point of view of the theory of measurement, but it is becoming
increasingly important in the foundations of quantum theory. It has previously been considered with respect to quantum state compatibility criteria between three parties \cite{Caves2002} where Caves et al. derive necessary and sufficient conditions for conclusive exclusion of a single state from a set of three pure states to be possible. More recently it has found use in investigating the plausibility of $\psi$-epistemic theories describing quantum mechanics \cite{Pusey2012}.

As recognized in \cite{Pusey2012} for the case of single state exclusion, the problem of conclusive exclusion can be formulated in the framework of Semidefinite Programs (SDPs). As well as being efficiently numerically solvable, SDPs also offer a structure that can be exploited to derive statements about the underlying problem they describe \cite{Vandenberghe1996, Watrous2011}. This has already been applied to the problem of state discrimination \cite{Jezek2002, Eldar2003b, Eldar2003}. Given that minimum error state discrimination forms a subclass ($m=k-1$) of the general exclusion framework, it is reasonable to expect that a similar approach will pay dividends here.

For minimum error state discrimination, SDPs provide a route to produce necessary and sufficient conditions for a measurement to be optimal. Similarly, the SDP formalism can be applied to obtain such conditions for the task of minimum error state exclusion and we derive these in this paper. By applying these requirements to exclusion problems, we have a method for proving whether a given measurement is optimal for a given ensemble of states.

From the SDP formalism it is also possible to derive necessary conditions for $m$-state conclusive exclusion to be possible for a given set of states and lower bounds on the probability of error when it is not. A special case of this result is the fact that state discrimination can not be achieved when the set of states under consideration are non-orthogonal. By regarding perfect state discrimination as $(k-1)$-state conclusive exclusion, we re-derive this result.

As an application of our SDP and its properties we consider a game, motivated by the argument, due to PBR \cite{Pusey2012}, against a class of hidden variable theories. Assume that we have a physical theory, not necessarily that of quantum mechanics, such that, when we prepare a system, we describe it by a state, $\chi$. If our theory were quantum mechanics, then $\chi$ would be identified with $\ket{\psi}$, the usual quantum state. Furthermore, suppose that $\chi$ does not give a complete description of the system. We assume that such a description exists, although it may always be unknown to us, and denote it $\lambda$. As $\chi$ is an incomplete description of the system, it will be compatible with many different complete states. We denote these states $\lambda\in\Lambda_{\chi}$. PBR investigate whether for distinct quantum descriptions, $\ket{\psi_0}$ and $\ket{\psi_1}$, it is possible that $\Lambda_{\ket{\psi_0}}\cap\Lambda_{\ket{\psi_1}}\neq\emptyset$. Models that satisfy this criteria are called $\psi$-epistemic, see \cite{Harrigan2010} for a full description.

Consider now the following scenario. Alice gives Bob a system prepared according to one of two descriptions, $\chi_1$ or $\chi_2$, and Bob's task is to identify which preparation he has been given. Bob observes the system and will identify the wrong preparation with probability $q$. Note that $0\leq q\leq 1/2$, as Bob will always have the option of randomly guessing the description without performing an observation. If $\Lambda_{\chi_1}\cap\Lambda_{\chi_2}\neq\emptyset$ then, even if Bob has access to the complete description of the system, $\lambda$, $q>0$ as there will exist $\lambda$ compatible with both $\chi_1$ and $\chi_2$.

Now suppose Bob is given $n$ such systems prepared independently, and we represent the preparation as a string in $\{0,1\}^n$. Bob's task is to output such an $n$-bit string and he wins if his is not identical to the string corresponding to Alice's preparation, i.e., he attempts to exclude one of the $2^n$ preparations. We refer to this as the `PBR game' and we will consider two scenarios for playing it. Under the first scenario, Bob can only perform measurements on each system individually. We refer to this as the separable version of the game. In the second scenario we allow Bob to perform global measurements on the $n$ systems he receives. We refer to this as the global version, and we are interested in how well quantum theory performs in this case.  We shall make a key assumption of PBR: that the global complete state of $n$ independent systems, $\Omega$, is given by the tensor product of the individual systems' complete states. This second, quantum, task is related to the problem of `Hedging bets with correlated quantum strategies' as introduced in \cite{Molina2012} and expanded upon in \cite{Arunachalam2013}.

By calculating Bob's probability of success in the PBR game under each of these schemes we gain a measure of how the predictions of quantum mechanics compare with the predictions of theories in which both $\Lambda_{\chi_1}\cap\Lambda_{\chi_2}\neq\emptyset$ and $\Omega=\otimes_{i=1}^{n}\lambda_i$ hold. As such, the result can be seen as similar in spirit to Tsirelson's bound \cite{Tsirelson1980} in describing how well quantum mechanical strategies can perform at the CHSH game.

This paper is organized as follows. First, in Section \ref{Section: The State Exclusion SDP}, we formulate the quantum state exclusion problem as an SDP, developing the structure we will need to analyze the task. Next, in Section \ref{Section: The optimal exclusion measurement}, we derive sufficient and necessary conditions for a measurement to be optimal in performing conclusive exclusion. It is these conditions that will assist us in investigating the entangled version of the PBR game. In Section \ref{Section: Necessary condition} we derive a necessary condition on the set of possible states for single state exclusion to be possible and in Section \ref{Section: Lower Bound} we give a lower bound on the probability of error when it is not. We apply the SDP formalism to the PBR game in Section \ref{Section: The PBR game} and use it to quantify the discrepancy between the predictions of a class of hidden variable theories and those of quantum mechanics. Finally, in Section \ref{Section: Alternative measures of exclusion}, we present alternative formulations of state exclusion and construct the relevant SDPs.

\section{The State Exclusion SDP} \label{Section: The State Exclusion SDP}
More formally, what does it mean to be able to perform conclusive exclusion? We first consider the case of single state exclusion and then show how it generalizes to $m$-state exclusion. Let the set of possible preparations on a $d$ dimensional quantum system be $\mathcal{P}=\left\{\rho_{i}\right\}^{k}_{i=1}$ and let each preparation occur with probability $p_i$. For brevity of notation we define $\tilde{\rho}_i=p_i \rho_i$. Call the prepared state $\sigma$. The aim is to perform a measurement on $\sigma$ so that, from the outcome, we can state $j\in\{1,\ldots,k\}$ such that $\sigma\neq\rho_j$.

Such a measurement will consist of $k$ measurement operators, one for attempting to exclude each element of $\mathcal{P}$. We want a measurement, described by $\mathcal{M}=\{M_i\}_{i=1}^{k}$, that never leads us to guess $j$ when $\sigma=\rho_j$. We need:
\begin{equation}
\tr\left[\rho_i M_i\right]=0, \quad \forall i, \label{single exclusion}
\end{equation}
or equivalently, since $\rho_i$ and $M_i$ are positive semidefinite matrices and $p_i$ is a positive number:
\begin{equation}
\alpha=\sum_{i=1}^{k}\tr\left[\tilde{\rho}_i M_i\right]=0. \label{excl cond}
\end{equation}
There will be some instances of $\mathcal{P}$ for which a $\mathcal{M}$ can not be found to satisfy Eq. (\ref{excl cond}). In these cases our goal is to minimize $\alpha$ which corresponds to the probability of failure of the strategy, `If outcome $j$ occurs say $\sigma\neq\rho_j$'.

Therefore, to obtain the optimal strategy for single state exclusion, our goal is to minimize $\alpha$ over all possible $\mathcal{M}$ subject to $\mathcal{M}$ forming a valid measurement. Such an optimization problem can be formulated as an SDP:
\begin{align}
\begin{split}
\underset{\mathcal{M}}{\textrm{Minimize: }}&\alpha=\sum_{i=1}^{k}\tr\left[\tilde{\rho}_i M_i\right]. \\
\textrm{Subject to: }&\sum_{i=1}^{k}M_i=\mathbb{I},\\
& M_i\geq0, \quad \forall i. \label{SDP Primal}
\end{split}
\end{align}
Here $\mathbb{I}$ is the $d$ by $d$ identity matrix and $A\geq 0$ implies that $A$ is a positive semidefinite matrix. The constraint $\sum_{i=1}^{k}M_i=\mathbb{I}$ corresponds to the fact that the $M_i$ form a complete measurement
and we don't allow inconclusive results.

Part of the power of the SDP formalism lies in constructing a `dual' problem to this `primal' problem given in Eq. (\ref{SDP Primal}). Details on the formation of the dual problem to the exclusion SDP can be found in
Appendix \ref{State Exclusion SDP Formulation}
and we state it here:

\begin{align}
\begin{split}
\underset{N}{\textrm{Maximize: }}&\beta=\tr\left[N\right].\\
\textrm{Subject to: }&N\leq \tilde{\rho}_i, \quad\forall i,\\
&N\in \textrm{Herm}. \label{SDP Dual}
\end{split}
\end{align}
For single state exclusion, the problem is essentially to maximize the trace of a Hermitian matrix $N$ subject to $\tilde{\rho}_i-N$ being a positive semidefinite matrix, $\forall$ $i$.

What of $m$-state conclusive exclusion? Define $Y_{(k,m)}$ to be the set of all subsets of the integers $\{1,\ldots,k\}$ of size $m$. The aim is to perform a measurement on $\sigma$ such that from the outcome we can state a set, $Y\in Y_{(k,m)}$, such that $\sigma\notin\{\rho_y\}_{y\in Y}$. Such a measurement, denoted $\mathcal{M}_m$, will consist of ${k \choose m}$ measurement operators and we require that, for each set $Y$:
\begin{equation}
\tr\left[\tilde{\rho}_y M_Y\right]=0,\quad \forall y\in Y.
\end{equation}
If we define:
\begin{equation}
\hat{\rho}_Y = \sum_{y\in Y} \tilde{\rho}_y,
\end{equation}
then this can be reformulated as requiring:
\begin{equation}
\tr\left[\hat{\rho}_Y M_Y\right]=0, \quad \forall Y\in Y_{(k,m)}. \label{m exclusion}
\end{equation}
Eq. (\ref{m exclusion}) is identical in form to Eq. (\ref{single exclusion}). Hence we can view $m$-state exclusion as single state exclusion on the set $\mathcal{P}_m=\{\hat{\rho}_Y\}_{Y\in Y_{(k,m)}}$. Furthermore, we can generalize this approach to an arbitrary collection of subsets that are not necessarily of the same size. With this in mind we restrict ourselves to considering single state exclusion in all that follows.

The tasks of state exclusion and state discrimination share many similarities. Indeed, if we instead maximize $\alpha$ in Eq. (\ref{SDP Primal}) and minimize $\beta$ in Eq. (\ref{SDP Dual}) together with inverting the inequality constraint to read $N\geq\tilde{\rho}_i$, we obtain the SDP associated with minimum error state discrimination. It is also possible to recast each problem as an instance of the other. Firstly, state discrimination can be put in the form of an exclusion problem by taking $m=k-1$ because if we exclude $k-1$ of the possible states, then we can identify $\sigma$ as the remaining state.

Following the observation of \cite{Nakahira2012} regarding minimum Bayes cost problems, state exclusion can be converted into a discrimination task. To see this, from $\mathcal{P}$ define:
\begin{equation}
\mathcal{R}=\left\{\vartheta_i=\frac{1}{k-1}\sum_{j\neq i}\tilde{\rho}_j\right\}_{i=1}^{k}.
\end{equation}
Writing $P_{\textit{error}}^{\textit{dis}}$ and $P_{\textit{error}}^{\textit{exc}}$ to distinguish between the probability of error in discrimination and exclusion, in state discrimination on $\mathcal{R}$ we would attempt to minimize:
\begin{align}
P_{\textit{error}}^{\textit{dis}}\left(\mathcal{R}\right)&=1-\sum_{i=1}^{k}\tr\left[\vartheta_i M_i\right]
\intertext{which can be rearranged to give (see Appendix \ref{exc-disc derivation}):}
P_{\textit{error}}^{\textit{dis}}\left(\mathcal{R}\right)&=\frac{k-2}{k-1}+\frac{1}{k-1}P_{\textit{error}}^{\textit{exc}}\left(\mathcal{P}\right). \label{exc-disc}
\end{align}
Hence, minimizing the error probability in discrimination on $\mathcal{R}$ is equivalent to minimizing the probability of error in state exclusion on $\mathcal{P}$ and the optimal measurement is the same for both. This interplay between the two tasks enables us to apply bounds on the error probability of state discrimination (see for example \cite{Qiu2010}) to the task of state exclusion.

Returning to the SDP, let us define the optimum solution to the primal problem to be $\alpha^{*}$ and the solution to the corresponding dual to be $\beta^{*}$. It is a property of all SDPs, known as weak duality, that $\beta\leq\alpha$. Furthermore, for SDPs satisfying certain conditions, $\alpha^{*}=\beta^{*}$ and this is known as strong duality. The exclusion SDP does fulfill these criteria, as shown in
 Appendix \ref{Slater's Theorem applied to Exclusion SDP}.
 Using weak and strong duality allows us to derive properties of the optimal measurement for the problem, a necessary condition on $\mathcal{P}$ for conclusive exclusion to be possible and a bound on the probability of error in performing the task.

\section{The optimal exclusion measurement} \label{Section: The optimal exclusion measurement}
Strong duality gives us a method for proving whether a feasible solution, satisfying the constraints of the primal problem, is an optimal solution. If $\mathcal{M}^{*}$ is an optimal measurement for the conclusive exclusion SDP, then, by strong duality, there must exist a Hermitian matrix $N^{*}$, satisfying the constraints of the dual problem, such that:
\begin{equation}
\sum_{i=1}^{k} \tr\left[\tilde{\rho}_i M_{i}^{*}\right]=\tr\left[N^{*}\right].
\end{equation}
Furthermore, the following is true:
\begin{theorem} \label{SD theorem}
Suppose a state $\sigma$ is prepared at random using a preparation from the set $\mathcal{P}$ according to some probability distribution $\{p_i\}_{i=1}^{k}$. Applying the measurement $\mathcal{M}$ to $\sigma$ is optimal for attempting to exclude a single element from the set of possible preparations if and only if:
\begin{equation}
N=\sum_{i=1}^{k} \left[\tilde{\rho}_i M_i\right], \label{N construct}
\end{equation}
is Hermitian and satisfies $N\leq \tilde{\rho}_{i}$, $\forall i$.
\end{theorem}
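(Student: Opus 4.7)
The plan is to use strong duality for the exclusion SDP (established in the cited appendix) together with a standard complementary-slackness argument. In one direction the dual-feasibility of the constructed $N$ will certify optimality of $\mathcal{M}$ through weak duality; in the other, optimality will force the complementary-slackness identity, whose operator-level form will produce the required Hermiticity and dual-feasibility of $N$ automatically.

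For the \emph{if} direction, suppose $N=\sum_i \tilde{\rho}_i M_i$ is Hermitian with $N\leq\tilde{\rho}_i$ for all $i$. Then $N$ is dual-feasible and, by linearity of the trace, $\tr[N]=\sum_i\tr[\tilde{\rho}_i M_i]$, which is exactly the primal objective value attained by $\mathcal{M}$. Combining weak duality $\beta\leq\alpha$ with strong duality $\alpha^{*}=\beta^{*}$, the chain
\begin{equation*}
\alpha^{*}\leq\sum_i\tr[\tilde{\rho}_i M_i]=\tr[N]\leq\beta^{*}=\alpha^{*}
\end{equation*}
collapses to equalities, so $\mathcal{M}$ is primal-optimal.

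For the \emph{only if} direction, suppose $\mathcal{M}$ is optimal. Strong duality supplies a Hermitian $N^{*}$ with $N^{*}\leq\tilde{\rho}_i$ for each $i$ and $\tr[N^{*}]=\sum_i\tr[\tilde{\rho}_i M_i]$. Using $\sum_i M_i=\mathbb{I}$, I rewrite this as $\sum_i\tr[(\tilde{\rho}_i-N^{*})M_i]=0$. Each summand is the trace of a product of two positive semidefinite operators, hence nonnegative, so each vanishes individually. Invoking the standard fact that $\tr[AB]=0$ with $A,B\geq 0$ forces $AB=0$, I obtain the operator identity $(\tilde{\rho}_i-N^{*})M_i=0$, i.e.\ $\tilde{\rho}_i M_i=N^{*}M_i$, for every $i$. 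Summing and using completeness then yields $\sum_i\tilde{\rho}_i M_i=N^{*}\sum_i M_i=N^{*}$, so the $N$ of Eq.~(\ref{N construct}) coincides with the Hermitian dual-feasible $N^{*}$, as required.

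The main subtlety is the operator-level complementary-slackness step: a priori the summands $\tilde{\rho}_i M_i$ need not be Hermitian individually, so Hermiticity of $N$ cannot be read off from the definition. It is only after using optimality to upgrade the trace identity $\tr[(\tilde{\rho}_i-N^{*})M_i]=0$ to the operator identity $(\tilde{\rho}_i-N^{*})M_i=0$, and then telescoping with $\sum_i M_i = \mathbb{I}$, that $N$ is revealed to equal the Hermitian dual optimum $N^{*}$. This upgrade is where all the content of the theorem is concentrated.
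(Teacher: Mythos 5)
Your proof is correct and follows essentially the same route as the paper: weak/strong duality certifies the ``if'' direction, and the ``only if'' direction extracts the operator identity $\tilde{\rho}_i M_i = N^{*}M_i$ and telescopes with $\sum_i M_i=\mathbb{I}$. The only difference is cosmetic: the paper invokes complementary slackness as a black-box proposition, whereas you re-derive it inline via the standard fact that $\tr[AB]=0$ for $A,B\geq 0$ forces $AB=0$ --- which is precisely the proof of that proposition specialized to this SDP.
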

The proof of Theorem \ref{SD theorem} is given in 
Appendix \ref{Necessary and sufficient conditions for a measurement to be optimal} 
 and revolves around the application of strong duality together with a property called complementary slackness. It is similar in construction to Yuen et al.'s \cite{Yuen1975} derivation of necessary and sufficient conditions for showing a quantum measurement is optimal for minimizing a given Bayesian cost function.
This result provides us with a method for proving a measurement is optimal; we construct $N$ according to Eq. (\ref{N construct}) and show that it satisfies the constraints of the dual problem. It is this technique which will allow us to analyze the PBR game in the quantum setting.

\section{Necessary condition for single state conclusive exclusion} \label{Section: Necessary condition}
Through the application of weak duality we can also gain insight into the SDP. As the optimal solution to the dual problem provides a lower bound on the solution of the primal problem, any feasible solution to the dual does too although it may not necessarily be tight. This relation can be summarized as:
\begin{equation}
\tr\left[N^{feas}\right]\leq\tr\left[N^{*}\right]=\beta^{*}=\alpha^{*}.
\end{equation}
In particular if, for a given $\mathcal{P}$, we can construct a feasible $N$ with $\tr\left[N\right]>0$, then we have $\alpha^{*}>0$ and hence conclusive exclusion is not possible.

Constructing such an $N$ gives rise to the following necessary condition on the set $\mathcal{P}$ for conclusive exclusion to be possible: 
\begin{theorem} \label{Necc Cond Theorem}
Suppose a system is prepared in the state $\sigma$ using a preparation chosen at random from the set $\mathcal{P}=\{\rho_i\}_{i=1}^{k}$. Single state conclusive exclusion is possible only if:
\begin{equation}
\sum_{j\neq l=1}^{k} F(\rho_j,\rho_l)\leq k(k-2), \label{Necc Con}
\end{equation}
where $F(\rho_j,\rho_l)$ is the fidelity between states $\rho_j$ and $\rho_l$.
\end{theorem}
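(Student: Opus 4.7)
The plan is to prove the contrapositive by leveraging weak duality exactly as the paragraph preceding the theorem advertises: if the inequality in Eq.~(\ref{Necc Con}) is violated, I will exhibit a Hermitian $N$ that is feasible for the dual SDP of Eq.~(\ref{SDP Dual}) and has $\tr[N]>0$; then weak duality forces $\alpha^{*}\geq\tr[N]>0$, so Eq.~(\ref{excl cond}) cannot be solved with $\alpha=0$ and conclusive exclusion is impossible. A convenient first reduction is to take the uniform prior $p_{i}=1/k$: the question of whether $\alpha=0$ is achievable depends only on the states, since each term in $\alpha$ is of the form $p_{i}\tr[\rho_{i}M_{i}]$ and $p_{i}>0$ can be cancelled. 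After this reduction, $\tilde\rho_{i}=\rho_{i}/k$ and the dual constraint becomes $N\leq\rho_{i}/k$ for every $i$.

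The ansatz I would try first, guided by matching the form of the target inequality, is
\begin{equation*}
N=\frac{1}{k^{2}}\sum_{j\neq l}X_{jl}-\frac{k-2}{k^{2}}\sum_{j=1}^{k}\rho_{j},
\end{equation*}
where $X_{jl}$ is a Hermitian cross-term whose trace reproduces $F(\rho_{j},\rho_{l})$. For pure states the symmetrized product $X_{jl}=\tfrac{1}{2}(\sqrt{\rho_{j}}\sqrt{\rho_{l}}+\sqrt{\rho_{l}}\sqrt{\rho_{j}})$ does the job because $\tr[\sqrt{\rho_{j}}\sqrt{\rho_{l}}]=|\langle\psi_{j}|\psi_{l}\rangle|^{2}=F(\rho_{j},\rho_{l})$; for general mixed states, Uhlmann's theorem suggests replacing $\sqrt{\rho_{j}}\sqrt{\rho_{l}}$ by $\sqrt{\rho_{j}}U_{jl}\sqrt{\rho_{l}}$ with a pairwise-optimal unitary $U_{jl}$ so that the trace evaluates to $F(\rho_{j},\rho_{l})$. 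With either choice one gets, using $\tr[\rho_{j}]=1$,
\begin{equation*}
\tr[N]=\frac{1}{k^{2}}\Bigl(\sum_{j\neq l}F(\rho_{j},\rho_{l})-k(k-2)\Bigr),
\end{equation*}
so $\tr[N]>0$ is exactly the negation of Eq.~(\ref{Necc Con}).

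The main obstacle is verifying the operator inequalities $N\leq\rho_{i}/k$, one per $i$; the trace calculation is routine but a naive ansatz can easily have a negative eigenvalue in the wrong direction (for example, $2\rho_{1}-\{\sqrt{\rho_{1}},\sqrt{\rho_{2}}\}$ already fails positivity for non-orthogonal pure states in the $k=2$ case). I would attack feasibility via an operator Cauchy--Schwarz/Schur-complement argument applied to the block operator $A=\sum_{j}\sqrt{\rho_{j}}\otimes|j\rangle$ (or its Uhlmann-rotated variant), using positivity of $AA^{\dagger}$ and $A^{\dagger}A$ to dominate the cross-term sum by a constant times $\sum_{j}\rho_{j}$, with the coefficient $k-2$ in the correction chosen precisely so the resulting bound closes for every $i$. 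If this direct route runs into the feasibility problem noted above, my fallback would be to route through Nakahira's correspondence Eq.~(\ref{exc-disc}): a lower bound on $P_{\textit{error}}^{\textit{dis}}(\mathcal{R})$ in terms of the pairwise fidelities $F(\vartheta_{i},\vartheta_{j})$ translates, via joint concavity of fidelity and the definition $\vartheta_{i}=\tfrac{1}{k-1}\sum_{j\neq i}\tilde\rho_{j}$, into a lower bound on $\sum_{j\neq l}F(\rho_{j},\rho_{l})$, and insisting $P_{\textit{error}}^{\textit{exc}}=0$ reproduces exactly Eq.~(\ref{Necc Con}).
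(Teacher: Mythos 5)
Your overall route is the paper's: prove the contrapositive by exhibiting a dual-feasible Hermitian $N$, built from Uhlmann-rotated cross terms $\sqrt{\rho_j}U_{jl}\sqrt{\rho_l}+\sqrt{\rho_l}U_{jl}^{*}\sqrt{\rho_j}$ minus a multiple of $\sum_j\rho_j$, whose trace is positive exactly when Eq.~(\ref{Necc Con}) fails, and invoke weak duality. The trace computation is correct (though note that for pure states $\tr[\sqrt{\rho_j}\sqrt{\rho_l}]=|\bracket{\psi_j}{\psi_l}|^2=F^2$, not $F$; the Uhlmann unitary is needed even there, since the paper uses the square-root fidelity $F=|\bracket{\psi_j}{\psi_l}|$). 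The gap is in the feasibility step, and it is not merely a missing verification: the specific ansatz $N=\frac{1}{k^{2}}\sum_{j\neq l}X_{jl}-\frac{k-2}{k^{2}}\sum_{j}\rho_{j}$ is genuinely infeasible, so no Cauchy--Schwarz or Schur-complement argument can rescue it. Concretely, take $k=3$, $\rho_1=\rho_2=\ketbra{0}{0}$, $\rho_3=\ketbra{+}{+}$, with $U_{12}=\mathbb{I}$ and $U_{13}=U_{23}$ the Hadamard. Then
\begin{equation*}
N=\frac{1}{9}\Bigl[2\ketbra{0}{+}+2\ketbra{+}{0}-\ketbra{+}{+}\Bigr]
=\frac{1}{9}\begin{pmatrix}2\sqrt{2}-\tfrac12 & \sqrt{2}-\tfrac12\\ \sqrt{2}-\tfrac12 & -\tfrac12\end{pmatrix},
\end{equation*}
whose $(0,0)$ entry is $\approx 0.259$, while $\bigl(\tfrac13\rho_3\bigr)_{00}=\tfrac16$; hence $\tfrac13\rho_3-N$ has a negative diagonal entry and $N\not\leq\rho_3/3$ (one can check $N\not\leq\rho_1/3$ either). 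Here $\sum_{j\neq l}F=2+2\sqrt2>3$, so a feasible positive-trace $N$ does exist --- just not yours.

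The missing idea is the $\epsilon$-regularization. The paper's $N$ carries a factor $(1-\epsilon)/(k-2)$ on the cross terms and a free overall scale $p$, and feasibility is proved by splitting $\rho_i-N$ into perfect squares of the form $\bigl(c\sqrt{\rho_i}\sqrt{U_{ir}}-c'\sqrt{\rho_r}\sqrt{U_{ir}^{*}}\bigr)\bigl(\cdot\bigr)^{\dagger}$ plus a residual multiple of $\rho_i$; the squares involving index $i$ need an $\epsilon p\,\rho_r$ term on the other side, and the residual is nonnegative only when $p\leq\epsilon\big/\bigl[\tfrac{(k-1)(1-\epsilon)^2}{(k-2)^2}-\epsilon\bigr]$ (Eq.~(\ref{p epsilon condition})), which forces $p\to0$ as $\epsilon\to0$. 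One then gets $\tr[N]>0$ iff $\sum_{j\neq l}F>k(k-2)/(1-\epsilon)$ and recovers the clean bound only in the limit $\epsilon\to0$: there is no single feasible $N$ witnessing the threshold, only a family approaching it. Your fallback via Eq.~(\ref{exc-disc}) is also not a proof as stated --- joint concavity gives inequalities on $F(\vartheta_i,\vartheta_j)$ in the wrong direction to convert a known discrimination bound into Eq.~(\ref{Necc Con}) without further work. To repair the argument, adopt the perturbed ansatz and the limiting step.
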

The full proof of this theorem is given in 
Appendix \ref{Necessary Condition for Conclusive Exclusion}. 
but we sketch it here.
Define $N$ as follows:
\begin{equation}
N=-p\sum_{r=1}^{k}\rho_r +\frac{1-\epsilon}{k-2}p\sum_{1\leq j<l\leq k}\left(\sqrt{\rho_j}U_{jl}\sqrt{\rho_l}+\sqrt{\rho_l}U_{jl}^{*}\sqrt{\rho_j}\right),
\end{equation}
where the $U_{jl}$ are unitary matrices chosen such that:
\begin{equation}
\tr\left[N\right]=-kp+\frac{1-\epsilon}{k-2} p\sum_{j\neq l=1}^{k} F(\rho_j,\rho_l).
\end{equation}
$N$ is Hermitian and for suitable $p$ and $\epsilon$ it can be shown that $\rho_i-N\geq 0$, $\forall i$. Eq. (\ref{Necc Con}) follows by determining when $\tr[N]>0$ and letting $\epsilon \rightarrow 0$.
Note that the probability with which states are prepared, $\{p_i\}_{i=1}^{k}$, does not impact on whether conclusive exclusion is possible or not.

This is only a necessary condition for single state conclusive exclusion and there exist sets of states that satisfy Eq. (\ref{Necc Con}) for which it is not possible to perform conclusive exclusion. Nevertheless, there exist sets of states on the cusp of satisfying Eq. (\ref{Necc Con}) for which conclusive exclusion is possible. For example, the set of states of the form:
\begin{equation}
\ket{{\psi}_i}=\sum_{j\neq i}^{k}\frac{1}{\sqrt{k-1}}\ket{j},
\end{equation}
for $i=1$ to $k$, can be conclusively excluded by the measurement in the orthonormal basis $\{\ket{i}\}_{i=1}^{k}$ and yet:
\begin{align}
\begin{split}
\sum_{j\neq l=1}^{k} F\left(\ketbra{\psi_j}{\psi_j},\ketbra{\psi_l}{\psi_l}\right)&=\sum_{j\neq l=1}^{k}|\bracket{\psi_j}{\psi_l}|\\
&= k(k-2).
\end{split}
\end{align}


It can be shown that the necessary condition for conclusive state discrimination can be obtained from Theorem \ref{Necc Cond Theorem} and the interested reader can find this derivation in Appendix \ref{Necessary condition for conclusive state discrimination}.

\section{Lower bound on the probability of error} \label{Section: Lower Bound}
Weak duality can also be used to obtain the following lower bound on $\alpha^{*}$:
\begin{theorem} \label{lower bound}
For two Hermitian operators, $A$ and $B$, define $\min\left(A,B\right)$ to be:
\begin{equation}
\min\left(A,B\right)=\frac{1}{2}\left[A+B-|A-B|\right].
\end{equation}
Given a set of states $\mathcal{P}=\{\rho_i\}_{i=1}^{k}$ prepared according to some probability distribution $\{p_i\}_{i=1}^{k}$ and a permutation $\varepsilon$, acting on $k$ objects, taken from the permutation group $S_k$, consider:
\begin{equation}
N_{\varepsilon}=\min\left(\tilde{\rho}_{\varepsilon(k)},\min\left(\tilde{\rho}_{\varepsilon(k-1)},\min\left(\ldots,\min\left(\tilde{\rho}_{\varepsilon(2)},\tilde{\rho}_{\varepsilon(1)}\right)\right)\right)\right). \label{Nepsilon}
\end{equation}
Then:
\begin{equation}
\alpha^{*}\geq\underset{\varepsilon\in S_k}{\textnormal{Maximum: }}\tr\left[N_{\varepsilon}\right].
\end{equation}
\end{theorem}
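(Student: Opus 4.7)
The plan is to produce, for each permutation $\varepsilon \in S_k$, a feasible point of the dual SDP (\ref{SDP Dual}) equal to $N_\varepsilon$, and then invoke weak duality: since every feasible dual $N$ satisfies $\tr[N]\leq\beta^{*}\leq\alpha^{*}$, maximizing over $\varepsilon$ gives the stated bound. So the whole task reduces to verifying that $N_\varepsilon$ is Hermitian and satisfies $N_\varepsilon\leq\tilde{\rho}_i$ for every $i\in\{1,\ldots,k\}$.

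The core technical lemma I would prove first is the two-operator statement: if $A$ and $B$ are Hermitian then $\min(A,B)$ is Hermitian and satisfies $\min(A,B)\leq A$ and $\min(A,B)\leq B$ as operators. Hermiticity is immediate from the definition. For the inequalities, I would use the spectral definition of $|H|$ for Hermitian $H$, which gives $|H|\geq H$ and $|H|\geq -H$ in the semidefinite order. Applying this to $H=A-B$,
\begin{equation}
\min(A,B)=\tfrac{1}{2}(A+B)-\tfrac{1}{2}|A-B|\leq\tfrac{1}{2}(A+B)-\tfrac{1}{2}(A-B)=B,
\end{equation}
and the symmetric manipulation yields $\min(A,B)\leq A$.

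With the lemma in hand, I would induct on the level of nesting in (\ref{Nepsilon}). The base case $\min(\tilde{\rho}_{\varepsilon(2)},\tilde{\rho}_{\varepsilon(1)})$ is Hermitian and lies below both $\tilde{\rho}_{\varepsilon(1)}$ and $\tilde{\rho}_{\varepsilon(2)}$ by the lemma. For the inductive step, suppose the partial expression $N'$ built from $\tilde{\rho}_{\varepsilon(1)},\ldots,\tilde{\rho}_{\varepsilon(j-1)}$ is Hermitian and satisfies $N'\leq\tilde{\rho}_{\varepsilon(i)}$ for all $i\leq j-1$. Then $\min(\tilde{\rho}_{\varepsilon(j)},N')$ is Hermitian, $\leq\tilde{\rho}_{\varepsilon(j)}$ by the lemma, and $\leq N'\leq\tilde{\rho}_{\varepsilon(i)}$ for $i\leq j-1$, again by the lemma and transitivity. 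Iterating to $j=k$ and using the fact that $\varepsilon$ is a permutation of $\{1,\ldots,k\}$ gives $N_\varepsilon\leq\tilde{\rho}_i$ for every $i$. Hence $N_\varepsilon$ is dual-feasible, weak duality gives $\tr[N_\varepsilon]\leq\alpha^{*}$, and taking the maximum over $\varepsilon\in S_k$ finishes the argument.

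I do not anticipate a serious obstacle; the one point to be careful about is conceptual rather than technical, namely that $\min(A,B)$ is not a greatest lower bound in the semidefinite order (no such object exists in general), but merely one particular Hermitian matrix lying below both $A$ and $B$. That is all the dual constraint needs. The reason different permutations can give different bounds is precisely that this ``min'' operation is not associative or symmetric at the operator level, so the maximization over $S_k$ in the theorem statement is genuinely needed.
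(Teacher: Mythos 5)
Your proposal is correct and follows essentially the same route as the paper's proof: establish that $\min(A,B)\leq A$ and $\min(A,B)\leq B$ via the spectral decomposition of $|A-B|$, propagate this through the nested construction of $N_{\varepsilon}$ by transitivity, and conclude dual feasibility plus weak duality. Your explicit induction and the remark that $\min(A,B)$ need not be a greatest lower bound are slightly more detailed than the paper's terse ``by construction,'' but the argument is the same.
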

The proof of this result is given in Appendix \ref{lower bound proof} and relies upon showing that $\min(A,B)\leq A \textrm{ and }B$, together with the iterative nature of the construction of $N_{\varepsilon}$. Note that by considering a suitably defined $\max$ function, analogous to the $\min$ used in Theorem \ref{lower bound}, it is possible to derive a similar style of bound for the task of minimum error state discrimination. We omit it here however, as it is beyond the scope of this paper.

\section{The PBR game} \label{Section: The PBR game}
We now turn our attention to the PBR game. Suppose Alice gives Bob $n$ systems whose preparations are encoded by the string $\vec{x}\in\{0,1\}^n$. The state of system $i$ is $\chi_{x_i}$. Bob's goal is to produce a string $\vec{y}\in\{0,1\}^n$ such that $\vec{x}\neq\vec{y}$.

\subsection{Separable version}
In the first scenario, where Bob can only observe each system individually and we consider a general theory, we can represent his knowledge of the global system by:
\begin{equation}
\Gamma=\gamma_1\otimes\ldots\otimes\gamma_n,
\end{equation}
with $\gamma_i\in\{\Gamma_0,\Gamma_1,\Gamma_?\}$, representing his three possible observation outcomes. If $\gamma_i\in\Gamma_0$ he is certain the system preparation is described by $\chi_0$, if $\gamma_i\in\Gamma_1$ he is certain the system preparation is described by $\chi_1$ and if $\gamma_i\in\Gamma_?$ he remains uncertain whether the system was prepared in state $\chi_0$ or $\chi_1$ and he may make an error in assigning a preparation to the system. We denote the probability that Bob, after performing his observation, assigns the wrong preparation description to the system as $q$. Provided that $\Gamma_?\neq\emptyset$, then $q>0$.

Bob will win the game if for at least one individual system he assigns the correct preparation description. His strategy is to attempt to identify each value of $x_i$ and choose $y_i$ such that $y_i\neq x_i$. Bob's probability of outputting a winning string is hence:
\begin{equation}
P_{\textit{win}}^{S}=1-q^n. \label{P win sep}
\end{equation}

\subsection{Global version}
Now consider the second scenario and when the theory is quantum and global (i.e., entangled) measurements on the global system are allowed. We can write the global state that Alice gives Bob, labeled by $\vec{x}$, as:
\begin{equation}
\ket{\Psi_{\vec{x}}}=\bigotimes^{n}_{i=1}\ket{\psi_{x_i}}.
\end{equation}
Bob's task can now be regarded as attempting to perform single state conclusive exclusion on the set of states $\mathcal{P}=\{\ket{\Psi_{\vec{x}}}\}_{\vec{x}\in\{0,1\}^n}$; he outputs the string associated to the state he has excluded to have the best possible chance of winning the game.

To calculate his probability of winning $P_{\textit{win}}^{G}$ we need to construct and solve the associated SDP. Without loss of generality, we can take the states $\ket{\psi_0}$ and $\ket{\psi_1}$ to be defined as:
\begin{align}
\begin{split}
\ket{\psi_0}&=\cos\left(\frac{\theta}{2}\right)\ket{0}+\sin\left(\frac{\theta}{2}\right)\ket{1}, \\
\ket{\psi_1}&=\cos\left(\frac{\theta}{2}\right)\ket{0}-\sin\left(\frac{\theta}{2}\right)\ket{1},
\end{split}
\end{align}
where $0\leq\theta\leq\pi/2$. The global states $\ket{\Psi_{\vec{x}}}$ are then given by:
\begin{equation}
\ket{\Psi_{\vec{x}}}=\sum_{\vec{r}}\left(-1\right)^{\vec{x}\cdot\vec{r}}\left[\cos\left(\frac{\theta}{2}\right)\right]^{n-|\vec{r}|}\left[\sin\left(\frac{\theta}{2}\right)\right]^{|\vec{r}|}\ket{\vec{r}},
\end{equation}
where $\vec{r}\in\{0,1\}^n$ and $|\vec{r}|=\sum^{n}_{i=1}r_i$.

From \cite{Pusey2012}, we know that single state conclusive exclusion can be performed on this set of states provided $\theta$ and $n$ satisfy the condition:
\begin{equation}
2^{1/n}-1\leq \tan\left(\frac{\theta}{2}\right). \label{PBR Criterion}
\end{equation}
When this relation holds, $P_{\textit{win}}^{G}=1$. What however, happens outside of this range? Whilst strong numerical evidence is given in \cite{Pusey2012} that it will be the case that $P_{\textit{win}}^{G}<1$, can it be shown analytically?

Through analyzing numerical solutions to the SDP (performed using \cite{Lofberg2004}, \cite{Sturm1999}), there is evidence to suggest that the optimum measurement to perform when Eq. (\ref{PBR Criterion}) is not satisfied is given by the projectors:
\begin{align}
\ket{\zeta_{\vec{x}}}=\frac{1}{\sqrt{2^n}}\left(\ket{\vec{0}}-\sum_{\vec{r}\neq\vec{0}}\left(-1\right)^{\vec{x}\cdot\vec{r}}\ket{\vec{r}}\right), \label{PBR Projector}
\end{align}
which are independent of $\theta$. That the set $\{\ket{\zeta_{\vec{x}}}\}_{\vec{x}\in\{0,1\}^n}$ is the optimal measurement for attempting to perform conclusive exclusion is shown in 
Appendix \ref{PBR Game}.

If we construct $N$ as per Eq. (\ref{N construct}) and consider the trace, we can determine how successfully single state exclusion can be performed. This is done in 
Appendix \ref{PBR Game} 
and we find:
\begin{equation}
\tr\left[N\right]=\frac{1}{2^n}\left[\cos\left(\frac{\theta}{2}\right)\right]^{2n}\left(2-\left[1+\tan\left(\frac{\theta}{2}\right)\right]^n\right)^2.
\end{equation}
This is strictly positive and hence we have shown that Eq. (\ref{PBR Criterion}) is a necessary condition for conclusive exclusion to be possible on the set $\mathcal{P}$.

In summary, we have:
\begin{align}
\begin{split}
&\textrm{If: } 2^{1/n}-1\leq \tan\left(\frac{\theta}{2}\right), \\
&\quad P_{\textit{win}}^{G}=1. \\
&\textrm{Else:} \\
&\quad P_{\textit{win}}^{G}=1-\frac{1}{2^n}\left[\cos\left(\frac{\theta}{2}\right)\right]^{2n}\left(2-\left[1+\tan\left(\frac{\theta}{2}\right)\right]^n\right)^2
\end{split}
\end{align}
which characterizes the success probability of the quantum strategy.

\subsection{Comparison}
What is the relation between $P_{\textit{win}}^{S}$ and $P_{\textit{win}}^{G}$? If, in the separable scenario, we take the physical theory as being quantum mechanics and Bob's error probability as arising from the fact that it is impossible to distinguish between non-orthogonal quantum states, we can write \cite{Helstrom1976}:
\begin{align}
\begin{split}
q&=\left(\frac{1}{2}\right)\left(1-\sqrt{1-\left|\bracket{\psi_0}{\psi_1}\right|^2}\right)\\
&=\left(\frac{1}{2}\right)\left(1-\sin\left(\theta\right)\right).
\end{split}
\end{align}
With this substitution we find that $P_{\textit{win}}^{S}\leq P_{\textit{win}}^{G}$, $\forall n$. This is unsurprising as the first scenario is essentially the second but with a restricted set of allowable measurements.

Of more interest however, is if we view $q$ as arising from some hidden variable completion of quantum mechanics. If $\Lambda_{\ket{\psi_0}}\cap\Lambda_{\ket{\psi_1}}=\emptyset$, then if an observation of each $\ket{\psi_{x_i}}$ were to allow us to deduce $\lambda_{x_i}$ then $q=0$ and $P_{\textit{win}}^{S}=1\geq P_{\textit{win}}^{G}$. However, if $\Lambda_{\ket{\psi_0}}\cap\Lambda_{\ket{\psi_1}}\neq\emptyset$, then we have $q>0$ and $P_{\textit{win}}^{S}$ will have the property that Bob wins with certainty only as $n\rightarrow\infty$. On the other hand, $P_{\textit{win}}^{G}=1$ if and only if Eq. (\ref{PBR Criterion}) is satisfied and we have analytically proven the necessity of the bound obtained by PBR. Furthermore, we have defined a game that allows the quantification of the difference between the predictions of general physical theories, including those that attempt to provide a more complete description of quantum mechanics, and those of quantum mechanics.

\section{Alternative measures of exclusion} \label{Section: Alternative measures of exclusion}

There exist multiple strategies and figures of merit when undertaking state discrimination. In addition to considering minimum error discrimination or unambiguous discrimination, further variants may try to minimize the maximum error probability \cite{Kosut2004} or allow only a certain probability of obtaining an inconclusive measurement result \cite{Fiuravsek2003}. Similarly, alternative methods to minimum error can be defined for state exclusion and in this section unambiguous exclusion and worst case error exclusion are defined and the related SDPs given.

\subsection{Unambiguous State Exclusion}

In unambiguous state exclusion on the set of preparations $\mathcal{P}=\{\tilde{\rho}_i\}_{i=1}^{k}$ we consider a measurement given by $\mathcal{M}=\{M_1, \ldots, M_k, M_?\}$. If we obtain measurement outcome $i$ $(1\leq i\leq k)$, then we can exclude with certainty the state $\rho_i$. However, if we obtain the outcome labeled $?$, we can not infer which state to exclude. We wish to minimize the probability of obtaining this inconclusive measurement:
\begin{equation}
\alpha=\sum_{i=1}^{k}\tr\left[\tilde{\rho}_i M_?\right],
\end{equation}
which can be rewritten as:
\begin{equation}
\alpha=\tr\left[\sum_{j=1}^{k}\tilde{\rho}_j\left(\mathbb{I}-\sum_{i=1}^{k}M_i\right)\right].
\end{equation}

Defining $\tilde{\alpha}=1-\alpha$, the primal SDP associated with this task is given by:
\begin{align}
\begin{split}
\underset{\mathcal{M}}{\textrm{Maximize: }}&\tilde{\alpha}=\tr\left[\sum_{j=1}^{k}\tilde{\rho}_j\sum_{i=1}^{k}M_i\right]. \\
\textrm{Subject to: }&\sum_{i=1}^{k}M_i\leq\mathbb{I},\\
& \tr\left[\tilde{\rho_i}M_i\right]=0, \quad 1\leq i\leq k, \\
& M_i\geq0, \quad 1\leq i\leq k. \label{Unambig Prime}
\end{split}
\end{align}
Here, the first and third constraints ensure that $\mathcal{M}$ is a valid measurement whilst the second, $\tr\left[\tilde{\rho}_i M_i\right]=0$, $1\leq i\leq k$, encapsulates the fact that when measurement outcome $i$ occurs we should be able to exclude state $\rho_i$ with certainty.

The dual problem can be shown to be (see Appendix \ref{Unambiguous State Exclusion SDP}):
\begin{align}
\begin{split}
\underset{N, \{a_i\}_{i=1}^{k}}{\textrm{Minimize: }}&\beta=\tr\left[N\right].\\
\textrm{Subject to: }&a_i\tilde{\rho}_i+N\geq \sum_{j=1}^{k}\tilde{\rho}_j, \quad 1\leq i\leq k,\\
&a_i\in\mathbb{R}, \quad \forall i,\\
&N\geq 0. \label{Unambig Dual}
\end{split}
\end{align}

Unambiguous state exclusion has recently found use in implementations of quantum digital signatures \cite{Collins2013}, enabling such schemes to be put into practice without the need for long term quantum memory. 

\subsection{Worst Case Error State Exclusion}

The goal of the SDP given in Eqs. (\ref{SDP Primal}) and (\ref{SDP Dual}) is to minimize the average probability of error, over all possible preparations, of the strategy, `If outcome $j$ occurs say $\sigma\neq\rho_j$'. An alternative goal would be to minimize the worst case probability of error that occurs:
\begin{equation}
\alpha=\max_{i}\tr\left[\tilde{\rho}_i M_i\right].
\end{equation}

The primal SDP associated with this task is:
\begin{align}
\begin{split}
\underset{\mathcal{M}}{\textrm{Minimize: }}&\alpha=\lambda. \\
\textrm{Subject to: }&\lambda\geq\tr\left[\tilde{\rho}_i M_i\right], \quad \forall i,\\
&\sum_{i=1}^{k}M_i=\mathbb{I},\\
& \lambda\geq0\in\mathbb{R}, \\
& M_i\geq0, \quad 1\leq i\leq k. \label{Worst Case Prime}
\end{split}
\end{align}
These constraints again encode that $\mathcal{M}$ forms a valid measurement and ensure that $\alpha$ picks out the worst case error probability across all possible preparations.

The associated dual problem is:
\begin{align}
\begin{split}
\underset{N, \{a_i\}_{i=1}^{k}}{\textrm{Maximize: }}&\beta=\tr\left[N\right].\\
\textrm{Subject to: }&N\leq a_i\tilde{\rho}_i, \quad \forall i,\\
&\sum_{i=1}^{k} a_i\leq1,\\
&a_i\geq0\in\mathbb{R}, \quad \forall i,\\
&N\in \textrm{Herm}. \label{Worst Case Dual}
\end{split}
\end{align}
The derivation of this is given in Appendix \ref{Worst Case Error State Exclusion SDP}.

\section{Conclusion} 
In this paper we have introduced the task of state exclusion and shown how it can be formulated as an SDP. Using this we have derived conditions for measurements to be optimal at minimum error state exclusion and a criteria for the task to be performed conclusively on a given set of states. We also gave a lower bound on the error probability. Furthermore, we have applied our SDP to a game which helps to quantify the differences between quantum mechanics and a class of hidden variable theories.

It is an open question, posed in \cite{Caves2002}, whether a POVM ever outperforms a projective measurement in conclusive exclusion of a single pure state. Whilst it can be shown from the SDP formalism that this is not the case when the states are linearly independent and conclusive exclusion is not possible to the extent that $\tr\left[M_i\rho_i\right]>0$, $\forall i$, further work is required to extend it and answer the above question. It would also be interesting to see whether it is possible to find further constraints and bounds, similar to Theorem \ref{Necc Cond Theorem} and Theorem \ref{lower bound}, to characterize when conclusive exclusion is possible.

Finally, the main SDP, as given in Eq. (\ref{SDP Primal}), is just one method for analyzing state exclusion in which we attempt to minimize the average probability of error. Alternative formulations were presented in Section \ref{Section: Alternative measures of exclusion}
and it would be interesting to study the relationships between them and that defined in Eq. (\ref{SDP Primal}).

\section*{Acknowledgments}

Part of this work was completed while S.B. and J.O. were visiting the Center for Quantum Technologies, Singapore and while R.J. was visiting the Bose Institute, Kolkata, India. S.B. thanks CQT for their support. The work of R.J. is supported by the Singapore Ministry of Education Tier 3 Grant and also the Core Grants of the Centre for Quantum Technologies, Singapore. J.O. is supported by the Royal Society and an EPSRC Established Career fellowship.

\bibliographystyle{apsrev4-1}

\clearpage

\widetext
\appendix
\appendixpage

This supplementary material contains five sections. In Appendix \ref{State Exclusion SDP Formulation}, we give the general definition of an SDP, derive the dual problem for the state exclusion SDP and show the relation to state discrimination. Next, in Appendix \ref{Strong Duality}, we show that the SDP exhibits strong duality and give the proof of Theorem \ref{SD theorem} from the main text. Appendix \ref{Necessary Conditions} derives the necessary condition for conclusion exclusion to be possible given in Theorem \ref{Necc Cond Theorem} as well as the associated corollary. It also contains the proof of the bound on the error probability of state exclusion, Theorem \ref{lower bound}. The PBR game is analyzed in Appendix \ref{PBR Game}. Finally, in Appendix \ref{Alternative State Exclusion SDPs}, we state alternative state exclusion SDPs.

\section{State Exclusion SDP Formulation} \label{State Exclusion SDP Formulation}
Contains:
\begin{itemize}
\item General definition of an SDP.
\item Derivation of the state exclusion SDP dual.
\item Recasting of state exclusion as a discrimination problem.
\end{itemize}

\subsection{General SDPs} \label{General SDPs}

In this section we state the general form of a Semidefinite Program as given in \cite{Watrous2011}. A semidefinite program is defined by three elements $\{A,B,\Phi\}$. $A$ and $B$ are Hermitian matrices, $A\in \textit{Herm}(\mathcal{X})$ and $B\in \textit{Herm}(\mathcal{Y})$, where $\mathcal{X}$ and $\mathcal{Y}$ are complex Euclidean spaces. $\Phi$ is a Hermicity preserving super-operator which takes elements in $\mathcal{X}$ to elements in $\mathcal{Y}$.

From these three elements, two optimization problems can be defined:
\begin{align}
\begin{split} \label{Prime}
\textrm{Primal Problem}\\
\underset{X}{\textrm{Minimize}} :& \quad \alpha=\tr[AX].\\
\textrm{Subject to} :& \quad \Phi(X)= B,\\
& \quad X\geq0.
\end{split}
\end{align}
\begin{align}
\begin{split} \label{Dual}
\textrm{Dual Problem}\\
\underset{Y}{\textrm{Maximize}} :& \quad \beta=\tr[BY].\\
\textrm{Subject to} :& \quad \Phi^{*}(Y)\leq A,\\
& \quad Y\in \textit{Herm}(\mathcal{Y}).
\end{split}
\end{align}
Here $\Phi^*$ is the dual map to $\Phi$ and is defined by:
\begin{align}
\tr[Y\Phi(X)]=\tr[X\Phi^{*}(Y)]. \label{Phi* equation}
\end{align}
We define the optimal solutions to the primal and dual problems to be $\alpha^{*}=\textrm{inf}_X \alpha$ and $\beta^{*}=\textrm{sup}_Y \beta$ respectively.

\subsection{State Exclusion SDP} \label{State Exclusion SDP}

Looking at the state exclusion primal problem,
 Eq. (\ref{SDP Primal}),
 we see that for the exclusion SDP:
\begin{itemize}
\item $A$ is a  $kd$ by $kd$ block diagonal matrix with each $d$ by $d$ block, labeled by $i$, given by $\tilde{\rho}_i$:
\begin{equation}
A=\left(\begin{array}{ccc}
\tilde{\rho}_1 &  & \\
 & \ddots & \\
& & \tilde{\rho}_k
\end{array}\right).
\end{equation}
\item $B$ is the $d$ by $d$ identity matrix.
\item $X$, the variable matrix, is a $kd$ by $kd$ block diagonal matrix where we label each $d$ by $d$ block diagonal by $M_i$:
\begin{equation}
X=\left(\begin{array}{ccc}
M_1 &  & \\
 & \ddots & \\
& & M_k
\end{array}\right).
\end{equation}
\item $Y$ is the $d$ by $d$ matrix we call $N$.
\item The map $\Phi$ is given by $\Phi(X)=\sum_{i} M_{i}$.
\end{itemize}

Using Eq. (\ref{Phi* equation}) we see that $\Phi^*$ must satisfy:
\begin{align}
\tr\left[N\sum_{i=1}^{k} M_i\right]=\tr\left[\left(\begin{array}{ccc}
M_1 & & \\
& \ddots & \\
& & M_{k}
\end{array} \right)\Phi^{*}(N)\right],
\end{align}
and hence $\Phi^{*}(N)$ produces a $kd$ by $kd$ block diagonal matrix with $N$ in each of the block diagonals:
\begin{equation}
\Phi^{*}(N)=\left(\begin{array}{ccc}
N & & \\
& \ddots & \\
& & N\end{array}\right).
\end{equation}

Substituting these elements into Eq. (\ref{Dual}), we obtain the dual SDP for state exclusion
 as stated in Eq. (\ref{SDP Dual})
.

\subsection{The relation between state discrimination and state exclusion} \label{exc-disc derivation}

Here we give the derivation of Eq. (\ref{exc-disc}).

Given $\mathcal{P}$ we define:
\begin{equation*}
\mathcal{R}=\left\{\vartheta_i=\frac{1}{k-1}\sum_{j\neq i}\tilde{\rho}_j\right\}_{i=1}^{k}.
\end{equation*}
Then, in state discrimination on $\mathcal{R}$ we would attempt to minimize:
\begin{align*}
P_{\textit{error}}^{\textit{dis}}\left(\mathcal{R}\right)&=1-\sum_{i=1}^{k}\tr\left[\vartheta_i M_i\right],\\
&=1-\sum_{i=1}^{k}\sum_{j\neq i}\frac{1}{k-1}\tr\left[\tilde{\rho}_j M_i\right],\\
&=1-\frac{1}{k-1}\sum_{i=1}^{k}\sum_{j=1}^{k}\tr\left[\tilde{\rho}_j M_i\right]+\frac{1}{k-1}\sum_{i=1}^{k}\tr\left[\tilde{\rho}_i M_i\right],\\
&=\frac{k-2}{k-1}+\frac{1}{k-1}P_{\textit{error}}^{\textit{exc}}\left(\mathcal{P}\right).
\end{align*}

\section{Strong Duality} \label{Strong Duality}
Contains:
\begin{itemize}
\item Statement of Slater's Theorem.
\item Proof that the exclusion SDP satisfies the conditions of Slater's Theorem.
\item Derivation of necessary and sufficient conditions for a measurement to be optimal for performing exclusion (proof of Theorem 
\ref{SD theorem}
).
\end{itemize}

\subsection{Slater's Theorem} \label{Slater's Theorem}

Slater's Theorem provides a means to test whether an SDP satisfies strong duality ($\alpha^{*}=\beta^{*}$).

\begin{theorem}{(Slater's Theorem.)}
The following implications hold for every SDP:
\begin{enumerate}
\item If there exists a feasible solution to the primal problem and a Hermitian operator $Y$ for which $\Phi^*(Y)<A$, then $\alpha^{*}=\beta^{*}$ and there exists a feasible $X^*$ for which $\tr[AX^*]=\alpha^{*}$.
\item If there exists a feasible solution to the dual problem and a positive semidefinite operator $X$ for which $\Phi(X)=B$ and $X>0$, then $\alpha^{*}=\beta^{*}$ and there exists a feasible $Y^*$ for which $\tr[BY^*]=\beta^{*}$.
\end{enumerate}
\end{theorem}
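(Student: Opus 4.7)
The plan is to deduce both statements from weak duality combined with a separating hyperplane argument that invokes strict feasibility (Slater's condition) precisely to rule out degenerate separations. By the symmetry of the primal/dual pair, part 2 follows from part 1 applied to the dual SDP (whose dual is again the primal), so I would focus the work on part 1.

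First I would record weak duality. For any primal feasible $X$ and dual feasible $Y$,
\begin{equation*}
\tr[BY] = \tr[\Phi(X)Y] = \tr[X \Phi^*(Y)] \leq \tr[XA] = \tr[AX],
\end{equation*}
where the middle equality is Eq.~(\ref{Phi* equation}) and the inequality uses $X \geq 0$ together with $A - \Phi^*(Y) \geq 0$. In particular $\beta^* \leq \alpha^*$.

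Next, using strict dual feasibility I would show the primal optimum is attained. If $\Phi^*(Y_0) < A$ then $A - \Phi^*(Y_0) \geq c \mathbb{I}$ for some $c > 0$, so for any primal feasible $X$,
\begin{equation*}
c \tr[X] \leq \tr[X(A-\Phi^*(Y_0))] = \tr[AX] - \tr[BY_0].
\end{equation*}
Thus the sublevel set $\{X \geq 0 : \Phi(X) = B,\; \tr[AX] \leq M\}$ is bounded for every $M$, hence compact, and by continuity of $\tr[A \cdot]$ the infimum is achieved at some $X^*$.

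For strong duality, I would consider the convex set
\begin{equation*}
C = \{(\Phi(X) - B,\; \tr[AX] - \alpha^* + \epsilon) : X \geq 0\} + (\{0\} \times \mathbb{R}_{\geq 0})
\end{equation*}
in $\textit{Herm}(\mathcal{Y}) \times \mathbb{R}$. For any $\epsilon > 0$, the origin does not lie in $C$: otherwise there would exist $X \geq 0$ with $\Phi(X) = B$ and $\tr[AX] \leq \alpha^* - \epsilon$, contradicting the definition of $\alpha^*$. The separating hyperplane theorem then produces a nonzero pair $(Y, s) \in \textit{Herm}(\mathcal{Y}) \times \mathbb{R}$ with $s \geq 0$ satisfying
\begin{equation*}
\tr[(\Phi(X) - B) Y] + s(\tr[AX] - \alpha^* + \epsilon) \geq 0 \quad \text{for all } X \geq 0.
\end{equation*}
Plugging in $X = tX'$ with $X' \geq 0$ arbitrary and letting $t \to \infty$ forces $\tr[X'(\Phi^*(Y) + sA)] \geq 0$ for all $X' \geq 0$, i.e.\ $\Phi^*(Y) + sA \geq 0$. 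If $s > 0$, rescaling by $1/s$ and relabelling gives a dual feasible $Y/s$ with $\tr[B(Y/s)] \geq \alpha^* - \epsilon$; sending $\epsilon \to 0$ yields $\beta^* \geq \alpha^*$, which combined with weak duality closes the duality gap and provides the optimal dual $Y^*$.

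The main obstacle is ruling out $s = 0$, and this is exactly where the Slater hypothesis $\Phi^*(Y_0) < A$ enters. If $s = 0$, the separating inequality reads $\tr[(\Phi(X)-B)Y] \geq 0$ for all $X \geq 0$, with $Y \neq 0$. I would derive a contradiction by pairing this with the strictly feasible $Y_0$: because $A - \Phi^*(Y_0) > 0$, the map $X \mapsto \tr[(\Phi(X)-B)Y]$ is unbounded below on the half-line of feasible perturbations generated by $A - \Phi^*(Y_0)$ (concretely, one constructs an $X \geq 0$ along which $\Phi(X)$ moves opposite to $Y$ while keeping $\tr[AX]$ controlled), violating the inequality. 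This scalar-positivity step is the delicate part of the argument; all remaining manipulations are bookkeeping that I would verify by applying the same reasoning to the dual problem to obtain part 2.
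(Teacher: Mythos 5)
The paper itself does not prove this theorem (it quotes it from standard references), so your attempt stands on its own. The weak-duality and primal-attainment (compactness) steps are fine, but the strong-duality step has a genuine gap at exactly the point you flag as delicate. You separate the origin from a set built in the \emph{primal's} image space $\textit{Herm}(\mathcal{Y})\times\mathbb{R}$, and in that setting the degenerate case $s=0$ is excluded by strict \emph{primal} feasibility (an $X>0$ with $\Phi(X)=B$), which is the hypothesis of part 2 --- not by the strict dual feasibility you are given in part 1. Your proposed contradiction does not go through: when $s=0$ the separation itself forces $\Phi^*(Y)\geq 0$ (your own $t\to\infty$ argument with $s=0$), and then along the half-line $X=t\bigl(A-\Phi^*(Y_0)\bigr)$ one has
\begin{equation*}
\tr\bigl[(\Phi(X)-B)Y\bigr]=t\,\tr\bigl[\bigl(A-\Phi^*(Y_0)\bigr)\Phi^*(Y)\bigr]-\tr[BY]\;\geq\;-\tr[BY],
\end{equation*}
which is nondecreasing in $t$ and bounded below, since a positive definite operator traced against a positive semidefinite one is nonnegative. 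Nothing becomes unbounded below, so no contradiction arises. Worse, degenerate separators can genuinely exist under your hypotheses: take $\Phi(X)=X_{11}$ on $2\times 2$ matrices, $B=0$, $A>0$; the dual is strictly feasible, the primal is feasible, yet $(Y,s)=(1,0)$ properly separates your set $C$ from the origin. Since the separating hyperplane theorem only guarantees \emph{some} separator, a proof structured as ``take the separator and show $s>0$'' cannot succeed with these hypotheses.

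The fix is to swap which problem you separate on. Run the hyperplane argument in the dual's constraint space $\textit{Herm}(\mathcal{X})\times\mathbb{R}$, on the convex set generated by points $\bigl(A-\Phi^*(Y)-S,\ \tr[BY]-\beta^*-\epsilon\bigr)$ with $Y$ Hermitian and $S\geq 0$; there the strictly feasible $Y_0$ with $A-\Phi^*(Y_0)>0$ is precisely what kills the $s=0$ case, and the multiplier you extract is a primal variable, consistent with part 1's conclusion that it is the \emph{primal} optimum that is attained. Equivalently: the primal-space separation you wrote, with $s=0$ excluded via an $X>0$, is the natural proof of part 2, and part 1 then follows by the primal/dual symmetry you already noted --- you have simply wired that symmetry the wrong way around.
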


\subsection{Slater's Theorem applied to Exclusion SDP} \label{Slater's Theorem applied to Exclusion SDP}

To see that the exclusion SDP satisfies the conditions of Slater's Theorem consider $X=\frac{1}{k} \mathbb{I}$ and $N=-\mathbb{I}$ (where the Identity matrices are taken to have the correct dimension). $X$ is strictly positive definite and so strictly satisfies the constraints of the primal problem. $N<0$ and hence $N<\tilde{\rho}_{i}$, $\forall i$, so $N$ strictly satisfies the constraints of the dual problem.

\subsection{Necessary and sufficient conditions for a measurement to be optimal} \label{Necessary and sufficient conditions for a measurement to be optimal}

To prove Theorem 
\ref{SD theorem}
we will need the following fact about SDPs:
\begin{proposition}{(Complementary Slackness.)} \label{Slackness}
Suppose $X$ and $Y$, which are feasible for the primal and dual problems respectively, satisfy $\tr[AX]=\tr[BY]$. Then it holds that:
\begin{equation*}
\Phi^*(Y)X=AX \textrm{ and }\Phi(X)Y=BY.
\end{equation*}
\end{proposition}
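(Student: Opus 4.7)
My plan is to combine the adjoint identity defining $\Phi^{*}$ with a standard positive-semidefinite fact. The primal and dual constraints give two positive semidefinite matrices: $X\geq 0$ by primal feasibility, and $A-\Phi^{*}(Y)\geq 0$ by dual feasibility. Using the defining relation $\tr[Y\Phi(X)]=\tr[X\Phi^{*}(Y)]$ together with the primal equality constraint $\Phi(X)=B$, I would rewrite $\tr[BY]=\tr[Y\Phi(X)]=\tr[\Phi^{*}(Y)X]$. The hypothesis $\tr[AX]=\tr[BY]$ then gives
\[
\tr\bigl[(A-\Phi^{*}(Y))X\bigr]=\tr[AX]-\tr[\Phi^{*}(Y)X]=0.
\]

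The key step is to invoke the following linear-algebra fact: if $P$ and $Q$ are positive semidefinite and $\tr[PQ]=0$, then $PQ=0$. This follows because $\tr[PQ]=\tr[P^{1/2}QP^{1/2}]$, and $P^{1/2}QP^{1/2}\geq 0$ is positive semidefinite with zero trace, hence is itself zero; factoring as $(Q^{1/2}P^{1/2})^{*}(Q^{1/2}P^{1/2})=0$ gives $Q^{1/2}P^{1/2}=0$, so $QP=0$, and by Hermiticity of $P$ and $Q$ one also has $PQ=0$ after taking adjoints. Applying this fact with $P=A-\Phi^{*}(Y)$ and $Q=X$ yields $(A-\Phi^{*}(Y))X=0$, which rearranges to the first claim $\Phi^{*}(Y)X=AX$.

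The second equation is immediate: since $\Phi(X)=B$ holds as an equality constraint of the primal, multiplying on the right by $Y$ gives $\Phi(X)Y=BY$ without any further work. The only genuine obstacle in the argument is the positive-semidefinite trace-zero fact above; once that is in hand, the remainder is bookkeeping with the adjoint relation $\tr[Y\Phi(X)]=\tr[X\Phi^{*}(Y)]$ and the hypothesis $\tr[AX]=\tr[BY]$.
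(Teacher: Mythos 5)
Your proof is correct and complete. The paper itself does not prove this proposition --- it is stated as a known ``fact about SDPs'' and used without argument (implicitly deferring to the cited lecture notes of Watrous) --- so there is no in-paper proof to compare against; what you have written is the standard argument. The chain $\tr[BY]=\tr[Y\Phi(X)]=\tr[X\Phi^{*}(Y)]$ using primal feasibility and the adjoint relation, followed by the observation that $\tr[(A-\Phi^{*}(Y))X]=0$ with both factors positive semidefinite forces $(A-\Phi^{*}(Y))X=0$, is exactly the canonical route, and your justification of the trace-zero lemma via $P^{1/2}QP^{1/2}\geq 0$ and the factorization $(Q^{1/2}P^{1/2})^{*}(Q^{1/2}P^{1/2})=0$ is sound (note that $A-\Phi^{*}(Y)$ is Hermitian precisely because dual feasibility asserts $\Phi^{*}(Y)\leq A$, so the square root is well defined). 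Your remark that the second identity $\Phi(X)Y=BY$ is an immediate consequence of the equality constraint $\Phi(X)=B$ is also correct.
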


We now give the proof for Theorem 
\ref{SD theorem}
.

\begin{proof}{(Proof of Theorem 
\ref{SD theorem}
.)}

Suppose we are given a valid measurement, $\mathcal{M}=\{M_i\}_{i=1}^{k}$, and that $N$, defined by:
\begin{equation*}
N=\sum_{i=1}^{k} \tilde{\rho}_i M_i,
\end{equation*}
satisfies the constraints of the dual problem. Then:
\begin{align*}
\beta&=\tr[N],\\
&=\tr\left[\sum_{i=1}^{k} \tilde{\rho}_i M_i\right],\\
&=\sum_{i=1}^{k}\tr\left[\tilde{\rho}_i M_i\right],\\
&=\alpha.
\end{align*}
Hence, by strong duality, $\mathcal{M}$ is an optimal measurement.

Now suppose $\mathcal{M}$ is an optimal measurement. By Proposition \ref{Slackness}, an optimal $N$ satisfies:
\begin{align*}
\Phi^{*}(N)\left(\begin{array}{ccc}
M_1 & & \\
& \ddots & \\
& & M_{k}
\end{array} \right)
&=
\left(\begin{array}{ccc}
\tilde{\rho}_1 M_1 & & \\
& \ddots & \\
& & \tilde{\rho}_k M_{k}
\end{array} \right),\\
\Rightarrow\quad\left(\begin{array}{ccc}
N M_1 & & \\
& \ddots & \\
& & N M_{k}
\end{array} \right)
&=
\left(\begin{array}{ccc}
\tilde{\rho}_1 M_1 & & \\
& \ddots & \\
& & \tilde{\rho}_k M_{k}
\end{array} \right),
\end{align*}
which implies that:
\begin{equation*}
N M_i = \tilde{\rho}_i M_i, \quad \forall i.
\end{equation*}
Taking the sum over $i$ on both sides and using the fact that $\sum_i M_i=\mathbb{I}$, we obtain:
\begin{equation*}
N=\sum_{i=1}^{k} \tilde{\rho}_i M_i,
\end{equation*}
as required.
\end{proof}

\section{Necessary Conditions and Bounds} \label{Necessary Conditions}
Contains:
\begin{itemize}
\item Derivation of a necessary condition for conclusive exclusion to be possible (proof of Theorem 
\ref{Necc Cond Theorem}
).
\item Derivation of the necessary condition for conclusive state discrimination to be possible 
.
\item Derivation of the lower bound on the error probability for the exclusion task (proof of Theorem \ref{lower bound}).
\end{itemize}

\subsection{Necessary Condition for Conclusive Exclusion} \label{Necessary Condition for Conclusive Exclusion}

Here we derive the necessary condition for single state conclusive exclusion to be possible that was given in Theorem 
\ref{Necc Cond Theorem}
.

\begin{proof}{(Proof of Theorem 
\ref{Necc Cond Theorem}
.)}

Suppose that $\mathcal{P}=\{\rho_i\}_{i=1}^{k}$. A feasible solution to the dual SDP, $N$, must be Hermitian and satisfy $N\leq\rho_i$, $\forall i$. Our goal is to construct such an $N$ with the property $\tr[N]>0$. If this is possible, conclusive exclusion is not possible.

First we define $U_{jl}$ to be a unitary such that $\tr\left[\sqrt{\rho_l}\sqrt{\rho_j}U_{jl}\right]=F(\rho_j,\rho_l)$ and note that $U_{lj}=U_{jl}^{*}$. We construct $N$ as follows (for $p,\epsilon \in (0,1) $):
\begin{align*}
N=-p\sum_{r=1}^{k}\rho_r +\frac{1-\epsilon}{k-2}p\sum_{1\leq j<l\leq k}\left(\sqrt{\rho_j}U_{jl}\sqrt{\rho_l}+\sqrt{\rho_l}U_{jl}^{*}\sqrt{\rho_j}\right),
\end{align*}
and note that $N$ is Hermitian. Now consider:
\begin{align*}
\rho_1-N&=(1+p)\rho_1+p\sum_{r=2}^{k}\rho_r-\frac{1-\epsilon}{k-2}p\sum_{1\leq j<l\leq k}\left(\sqrt{\rho_j}U_{jl}\sqrt{\rho_l}+\sqrt{\rho_l}U_{jl}^{*}\sqrt{\rho_j}\right),\\
&=\sum_{r=2}^{k}\left[\frac{1+p}{k-1}\rho_1+\epsilon p\rho_r -\frac{1-\epsilon}{k-2}p\left(\sqrt{\rho_1}U_{1r}\sqrt{\rho_r}+\sqrt{\rho_r}U_{1r}^{*}\sqrt{\rho_1}\right)\right]\\
&\quad +\frac{1-\epsilon}{k-2}p\sum_{2\leq j<l\leq k} \left[\rho_j+\rho_l -\sqrt{\rho_j}U_{jl}\sqrt{\rho_l}-\sqrt{\rho_l}U_{jl}^{*}\sqrt{\rho_j}\right],\\
&=\sum_{r=2}^{k}\left[\frac{1+p}{k-1}\rho_1+\epsilon p\rho_r -\frac{1-\epsilon}{k-2}p\left(\sqrt{\rho_1}U_{1r}\sqrt{\rho_r}+\sqrt{\rho_r}U_{1r}^{*}\sqrt{\rho_1}\right)\right]\\
&\quad + \frac{1-\epsilon}{k-2}p\sum_{2\leq j<l\leq k} \left(\sqrt{\rho_j}\sqrt{U_{jl}}-\sqrt{\rho_l}\sqrt{U_{jl}^{*}}\right)\left(\sqrt{U_{jl}^{*}}\sqrt{\rho_j}-\sqrt{U_{jl}}\sqrt{\rho_l}\right).
\end{align*}
The terms in the second summation on the last line are positive semidefinite. Consider, individually, the terms in the first summation:
\begin{align*}
&\frac{1+p}{k-1}\rho_1+\epsilon p\rho_r -\frac{1-\epsilon}{k-2}p\left(\sqrt{\rho_1}U_{1r}\sqrt{\rho_r}+\sqrt{\rho_r}U_{1r}^{*}\sqrt{\rho_1}\right),\\
=&\left[\frac{1+p}{k-1}-\left(\frac{(1-\epsilon)p}{k-2}\right)^2\frac{1}{\epsilon p}\right]\rho_1\\
&\quad+\left[\left(\frac{(1-\epsilon)p}{k-2}\right)^2\frac{1}{\epsilon p}\right]\rho_1 +\epsilon p\rho_r -\frac{1-\epsilon}{k-2}p\left(\sqrt{\rho_1}U_{1r}\sqrt{\rho_r}+\sqrt{\rho_r}U_{1r}^{*}\sqrt{\rho_1}\right),\\
=&\left[\frac{1+p}{k-1}-\left(\frac{(1-\epsilon)p}{k-2}\right)^2\frac{1}{\epsilon p}\right]\rho_1\\
&\quad+\left(\frac{(1-\epsilon)p}{(k-2)\sqrt{\epsilon p}}\sqrt{\rho_1}\sqrt{U_{1r}}-\sqrt{\epsilon p}\sqrt{\rho_r}\sqrt{U_{1r}^{*}}\right)\left(\frac{(1-\epsilon)p}{(k-2)\sqrt{\epsilon p}}\sqrt{U_{1r}^{*}}\sqrt{\rho_1}-\sqrt{\epsilon p}\sqrt{U_{1r}}\sqrt{\rho_r}\right).
\end{align*}
Hence, for $\rho_1-N$ to be positive semidefinite, we need the first term in the last line to be positive:
\begin{align}
\left[\frac{1+p}{k-1}-\left(\frac{(1-\epsilon)p}{k-2}\right)^2\frac{1}{\epsilon p}\right]&\geq 0, \nonumber\\
\frac{\epsilon}{\frac{(k-1)(1-\epsilon)^2}{(k-2)^2}-\epsilon}&\geq p. \label{p epsilon condition}
\end{align}
Therefore, provided $p$ and $\epsilon$ satisfy Eq. (\ref{p epsilon condition}), $N\leq \rho_1$. Similarly, one can argue that $\rho_i\leq N$, $\forall i$ and hence $N$ is a feasible solution to the dual problem.

We now wish to know under what conditions we have $\tr[N]>0$:
\begin{align*}
\begin{array}{crl}
&\tr\left[N\right]>&0,\\
\Rightarrow&-kp+\frac{1-\epsilon}{k-2}p\sum_{1\leq j<l\leq k}\tr\left[\sqrt{\rho_j}U_{jl}\sqrt{\rho_l}+\sqrt{\rho_l}U_{jl}^{*}\sqrt{\rho_j}\right]>&0,\\
\Rightarrow&\sum_{j\neq l=1}^{k} F(\rho_j,\rho_l)>&\frac{k(k-2)}{1-\epsilon}.
\end{array}
\end{align*}
Letting $\epsilon\rightarrow 0$ and using weak duality we obtain our result. Conclusive exclusion is not possible if $\sum_{j\neq l=1}^{k} F(\rho_j,\rho_l)>k(k-2)$.
\end{proof}

\subsection{Necessary condition for conclusive state discrimination} \label{Necessary condition for conclusive state discrimination}

Here we show how the necessary condition for perfect state discrimination to be possible
 can be derived from our necessary condition on conclusive state exclusion, Theorem 
\ref{Necc Cond Theorem}
.

\begin{corollary} \label{State Discrimination}
Conclusive state discrimination on the set $\mathcal{P}=\{\rho_i\}_{i=1}^{k}$ is possible only if $\mathcal{P}$ is an orthogonal set.
\end{corollary}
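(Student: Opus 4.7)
The plan is to reduce conclusive state discrimination on $\{\rho_i\}_{i=1}^k$ to single state conclusive exclusion on an auxiliary set, and then invoke Theorem \ref{Necc Cond Theorem}. By the remark at the end of Section \ref{Section: The State Exclusion SDP}, conclusive state discrimination on $k$ states is exactly $(k-1)$-state conclusive exclusion, which in turn is single state conclusive exclusion on $\mathcal{P}_{k-1}=\{\hat{\rho}_{Y_j}\}_{j=1}^{k}$ where $Y_j=\{1,\ldots,k\}\setminus\{j\}$. Since Theorem \ref{Necc Cond Theorem} is insensitive to the prior probabilities, I take uniform priors $p_i=1/k$, so that the normalized states playing the role of the $\rho_i$'s in the theorem are exactly $\vartheta_j=\frac{1}{k-1}\sum_{i\neq j}\rho_i$.

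Next I would apply Theorem \ref{Necc Cond Theorem} to the set $\{\vartheta_j\}_{j=1}^k$: if conclusive state discrimination is possible, then
\begin{equation*}
\sum_{j\neq l}F(\vartheta_j,\vartheta_l)\leq k(k-2).
\end{equation*}
The key step is then to lower bound $F(\vartheta_j,\vartheta_l)$ in terms of $F(\rho_j,\rho_l)$. Writing
\begin{equation*}
\vartheta_j=\tfrac{1}{k-1}\Big(\sum_{i\notin\{j,l\}}\rho_i+\rho_l\Big),\quad
\vartheta_l=\tfrac{1}{k-1}\Big(\sum_{i\notin\{j,l\}}\rho_i+\rho_j\Big),
\end{equation*}
and invoking the joint concavity of the fidelity,
\begin{equation*}
F\!\left(\sum_i q_i\tau_i,\sum_i q_i\omega_i\right)\geq \sum_i q_i F(\tau_i,\omega_i),
\end{equation*}
together with $F(\rho_i,\rho_i)=1$, gives
\begin{equation*}
F(\vartheta_j,\vartheta_l)\geq \frac{k-2+F(\rho_j,\rho_l)}{k-1}.
\end{equation*}

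Summing over $j\neq l$ and combining with the inequality from Theorem \ref{Necc Cond Theorem}, the $k(k-2)$ terms cancel and I am left with
\begin{equation*}
\frac{1}{k-1}\sum_{j\neq l}F(\rho_j,\rho_l)\leq 0.
\end{equation*}
Because fidelity is non-negative, every $F(\rho_j,\rho_l)$ with $j\neq l$ must vanish, which forces the $\rho_i$ to have pairwise orthogonal supports, proving the corollary. The main potential obstacle is the lower bound on $F(\vartheta_j,\vartheta_l)$; this is the one nontrivial ingredient, but it is an immediate consequence of the standard joint concavity of the fidelity, so no further work is needed.
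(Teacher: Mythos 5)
Your proof is correct, and while it follows the same overall reduction as the paper --- recast conclusive discrimination on $\mathcal{P}$ as single state conclusive exclusion on the complementary mixtures $\vartheta_j=\frac{1}{k-1}\sum_{i\neq j}\rho_i$ and then apply Theorem \ref{Necc Cond Theorem} --- the key fidelity estimate is genuinely different. The paper bounds $F(\vartheta_j,\vartheta_l)\geq\frac{k-2}{k-1}$ by comparing both arguments to the common part $A=\frac{1}{k-1}\sum_{r\neq j,l}\rho_r$ via operator monotonicity of $X\mapsto\sqrt{X}$ and of the trace, and must then run a separate, somewhat delicate case analysis (using eigenvectors of $\sqrt{A}$) to show the bound is strict unless the states are orthogonal; only then does summing over pairs force orthogonality. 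You instead invoke the joint concavity of the fidelity (a direct consequence of strong concavity, $F(\sum_ip_i\rho_i,\sum_iq_i\sigma_i)\geq\sum_i\sqrt{p_iq_i}\,F(\rho_i,\sigma_i)$ with $p_i=q_i$) to obtain the quantitative bound $F(\vartheta_j,\vartheta_l)\geq\frac{k-2+F(\rho_j,\rho_l)}{k-1}$, after which summation over the $k(k-1)$ ordered pairs immediately yields $\sum_{j\neq l}F(\rho_j,\rho_l)\leq 0$ and hence pairwise orthogonality, with no strictness discussion needed. Your route is shorter and arguably more robust, at the cost of importing joint concavity as a black box; the paper's route is more self-contained but requires the extra equality-case argument. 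Both correctly handle the insensitivity of Theorem \ref{Necc Cond Theorem} to the priors, and your observation that conclusive discrimination implies $\tr[\vartheta_jM_j]=0$ for the same measurement makes the reduction explicit.
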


\begin{proof}{(Proof of Corollary 
\ref{State Discrimination}
.)}
For $\mathcal{P}=\{\rho_i\}_{i=1}^{k}$, define:
\begin{equation*}
\hat{\rho}_j=\frac{1}{k-1}\sum_{i\neq j} \rho_i.
\end{equation*}
Let $j\neq l$ and consider:
\begin{equation*}
A=\frac{1}{k-1}\sum_{r\neq j,l}\rho_r.
\end{equation*}
We first show that $F(\hat{\rho}_j,\hat{\rho}_l)\geq F(\hat{\rho}_j,A)$. Consider:
\begin{align*}
F(\hat{\rho}_j,A)&=\tr\left[\sqrt{\sqrt{\hat{\rho}_j}A\sqrt{\hat{\rho}_j }}\right],\\
&\leq\tr\left[\sqrt{\sqrt{\hat{\rho}_j}\hat{\rho}_l\sqrt{\hat{\rho}_j }}\right],\\
&=F(\hat{\rho}_j,\hat{\rho}_l).
\end{align*}
The inequality follows from the following facts:
\begin{enumerate}
\item It can be easily seen from the definitions that $A\leq\hat{\rho}_l$.
\item If $B\geq C$ then $D^{*}BD\geq D^{*}CD$, $\forall D$. Hence:
\begin{align*}
\sqrt{\hat{\rho}_j}A\sqrt{\hat{\rho}_j}\leq\sqrt{\hat{\rho}_j}\hat{\rho}_l\sqrt{\hat{\rho}_j}.
\end{align*}
\item The square root function is operator monotone, so:
\begin{align*}
\sqrt{\sqrt{\hat{\rho}_j}A\sqrt{\hat{\rho}_j}}\leq\sqrt{\sqrt{\hat{\rho}_j}\hat{\rho}_l\sqrt{\hat{\rho}_j}}.
\end{align*}
\item The trace function is operator monotone and so finally:
\begin{align*}
\tr\left[\sqrt{\sqrt{\hat{\rho}_j}A\sqrt{\hat{\rho}_j}}\right]\leq\tr\left[\sqrt{\sqrt{\hat{\rho}_j}\hat{\rho}_l\sqrt{\hat{\rho}_j}}\right].
\end{align*}
\end{enumerate}

Using a similar argument to the above, it is possible to show that:
\begin{equation*}
F(\hat{\rho}_j,A)\geq F(A,A) =\frac{k-2}{k-1}.
\end{equation*}

If ${\rho}_j$, ${\rho}_l$ and $A$ are pairwise orthogonal, then $\hat{\rho}_j$ and $\hat{\rho}_l$ commute and are simultaneously diagonalizable. This means that:
\begin{align*}
F(\hat{\rho}_j,\hat{\rho}_l)&=\left|\left|\sqrt{\hat{\rho}_j}\sqrt{\hat{\rho}_l}\right|\right|_{\textit{Tr}},\\
&=\left|\left|A\right|\right|_{\textit{Tr}},\\
&=F(A,A),\\
&=\frac{k-2}{k-1}.
\end{align*}

Now suppose that $\rho_j$ and $A$ are not orthogonal. We take $\{a_r\}$ to be the eigenvalues and $\{\ket{v_r}\}$ to be the eigenvectors of $\sqrt{A}$, so:
\begin{align*}
F(\hat{\rho}_l,A)&\geq\tr\left[\sqrt{\hat{\rho_l}}\sqrt{A}\right],\\
&=\sum_{r}a_r\bra{v_r}\sqrt{\hat{\rho_l}}\ket{v_r}.
\end{align*}
We know that $\sqrt{\hat{\rho}_l}\geq\sqrt{A}$ and hence:
\begin{equation*}
\bra{v_r}\sqrt{\hat{\rho}_l}\ket{v_r}\geq a_r,\quad \forall r.
\end{equation*}
As $\rho_j$ and $A$ are not orthogonal:
\begin{align*}
\sum_r\bra{v_r}\sqrt{\hat{\rho}_l}\ket{v_r}>\sum_r a_r,
\end{align*}
and there must exist some $r$ such that:
\begin{equation*}
\bra{v_r}\sqrt{\hat{\rho}_l}\ket{v_r}> a_r.
\end{equation*}
Hence:
\begin{align*}
F(\hat{\rho}_l,A)&\geq\sum_{r}a_r\bra{v_r}\sqrt{\hat{\rho_l}}\ket{v_r},\\
&>\sum_r a_{r}^{2},\\
&=\tr\left[A\right], \\
&=\frac{k-2}{k-1}.
\end{align*}
So $F(\hat{\rho}_j,\hat{\rho}_l)=(k-2)/(k-1)$, $\forall l\neq j$, if and only if $\mathcal{P}$ is an orthogonal set.

By Theorem 
\ref{Necc Cond Theorem}
, for conclusive $(m-1)$-state exclusion (and hence conclusive state discrimination) to be possible, we require that:
\begin{align*}
\sum_{j\neq l=1}^{k} F(\hat{\rho}_j,\hat{\rho}_l)=k(k-2),
\end{align*}
which implies that $\mathcal{P}$ must be an orthogonal set.
\end{proof}

\subsection{Bound on success probability} \label{lower bound proof}

In this section we give the proof of Theorem \ref{lower bound}.

\begin{proof}{(Proof of Theorem \ref{lower bound}.)}
The goal is to show that $N_{\varepsilon}\leq\tilde{\rho}_i$, $\forall i$, where $N_{\varepsilon}$ is defined in Eq. (\ref{Nepsilon}). Recall that given two Hermitian operators, $A$ and $B$, $\min\left(A,B\right)$ is defined by:
\begin{equation}
\min\left(A,B\right)=\frac{1}{2}\left[A+B-\left|A-B\right|\right].
\end{equation}
Note that $\min(A,B)\leq A$ and $\min(A,B)\leq B$ as:
\begin{align*}
A-\min(A,B)&=\frac{1}{2}\left[A-B+\left|A-B\right|\right],\\
&=\frac{1}{2}\left[\sum_{i=1}^{d}\lambda_i\ketbra{u_i}{u_i}+\sum_{i=1}^{d}\left|\lambda_i\right|\ketbra{u_i}{u_i}\right],\\
&\geq 0,
\end{align*}
and similarly $B-\min(A,B)\geq 0$. Here $\sum_{i=1}^{d}\lambda_i\ketbra{u_i}{u_i}$ is the spectral decomposition of $A-B$.

The bound is obtained by constructing $N_{\varepsilon}$ iteratively as follows:
\begin{align*}
N_{\varepsilon}^{(2)}&=\min\left(\tilde{\rho}_{\varepsilon(2)},\tilde{\rho}_{\varepsilon(1)}\right),\\
N_{\varepsilon}^{(3)}&=\min\left(\tilde{\rho}_{\varepsilon(3)},N_{\varepsilon}^{(2)}\right),\\
\vdots&=\vdots\\
N_{\varepsilon}=N_{\varepsilon}^{(k)}&=\min\left(\tilde{\rho}_{\varepsilon(k)},N_{\varepsilon}^{(k-1)}\right).
\end{align*}
Using the fact that $\min(A,B)\leq A$ and $\min(A,B)\leq B$, by construction we have $N_\varepsilon\leq\tilde{\rho}_i$, $\forall i$. 
\end{proof}

\section{PBR Game} \label{PBR Game}
Contains:
\begin{itemize}
\item Proof that the set of projectors $\mathcal{M}=\{\ket{\zeta_{\vec{x}}}\}_{\vec{x}\in\{0,1\}^n}$, as given in 
Eq. (\ref{PBR Projector})
, forms a valid measurement.
\item Derivation of the conditions under which $\mathcal{M}$ is the optimal measurement for performing exclusion in the PBR game.
\item Derivation of how well $\mathcal{M}$ performs at the exclusion task.
\end{itemize}

\subsection{Proof that $\mathcal{M}$ is a measurement} \label{Proof that M is a measurement}
To see that $\mathcal{M}=\{\ket{\zeta_{\vec{x}}}\}_{\vec{x}\in\{0,1\}^n}$, where:
\begin{equation}
\ket{\zeta_{\vec{x}}}=\frac{1}{\sqrt{2^n}}\left(\ket{\vec{0}}-\sum_{\vec{r}\neq\vec{0}}\left(-1\right)^{\vec{x}\cdot\vec{r}}\ket{\vec{r}}\right),
\end{equation}
forms a valid measurement we shall show that it is a set of orthogonal vectors. Consider:
\begin{align*}
\bracket{\zeta_{\vec{s}}}{\zeta_{\vec{t}}}&=\frac{1}{2^n}\left(\bra{\vec{0}}-\sum_{\vec{r}\neq\vec{0}}\left(-1\right)^{\vec{s}\cdot\vec{r}}\bra{\vec{r}}\right)\left(\ket{\vec{0}}-\sum_{\vec{q}\neq\vec{0}}\left(-1\right)^{\vec{t}\cdot\vec{q}}\ket{\vec{q}}\right),\\
&=\frac{1}{2^n}\left(1+\sum_{\vec{r},\vec{q}\neq\vec{0}}\left(-1\right)^{\vec{s}\cdot\vec{r}}\left(-1\right)^{\vec{t}\cdot\vec{q}}\bracket{\vec{r}}{\vec{q}}\right),\\
&=\frac{1}{2^n}\sum_{\vec{r}}\left(-1\right)^{\left(\vec{s}+\vec{t}\right)\cdot\vec{r}},\\
&=\delta_{\vec{s}\vec{t}}.
\end{align*}
Hence $\mathcal{M}$ is a set of orthogonal vectors and therefore a valid measurement basis.

\subsection{Derivation of conditions under which $\mathcal{M}$ is an optimal measurement} \label{Derivation of conditions under which M is an optimal measurement} 
To show that this measurement, $\mathcal{M}$, is optimal for certain pairs of $n$ and $\theta$, we need to construct an $N$ as per 
Eq. (\ref{N construct}) 
and show that it satisfies the constraints of the dual problem. Writing $\tilde{\rho}_{\vec{x}}=\frac{1}{2^n}\ketbra{\Psi_{\vec{x}}}{\Psi_{\vec{x}}}$ and $M_{\vec{x}}=\ketbra{\zeta_{\vec{x}}}{\zeta_{\vec{x}}}$, we have:
\begin{equation}
N=\frac{1}{2^n}\sum_{\vec{x}}\ket{\Psi_{\vec{x}}}\bracket{\Psi_{\vec{x}}}{\zeta_{\vec{x}}}\bra{\zeta_{\vec{x}}}.
\end{equation}
Note that:
\begin{align*} \langle\Psi_{\vec{x}}|\zeta_{\vec{x}}\rangle&=\frac{1}{{\sqrt{2^n}}}\left(\left[\cos\left(\frac{\theta}{2}\right)\right]^n-\sum^{n}_{i=1}{n\choose{i}}\left[\cos\left(\frac{\theta}{2}\right)\right]^{n-i}\left[\sin\left(\frac{\theta}{2}\right)\right]^i \right),\\
&=\frac{1}{{\sqrt{2^n}}}\left[\cos\left(\frac{\theta}{2}\right)\right]^n\left(2-\left[1+\tan\left(\frac{\theta}{2}\right)\right]^n\right).
\end{align*}
So we have:
\begin{align}
N=C\left(\theta\right)\left[\ketbra{\vec{0}}{\vec{0}}-\sum_{\vec{r}\neq\vec{0}}\left[\tan\left(\frac{\theta}{2}\right)\right]^{|\vec{r}|}\ketbra{\vec{r}}{\vec{r}}\right], \label{PBR N}
\end{align}
where $C(\theta)$ is given by:
\begin{equation}
C\left(\theta\right)=\frac{1}{2^n}\left[\cos\left(\frac{\theta}{2}\right)\right]^{2n}\left(2-\left[1+\tan\left(\frac{\theta}{2}\right)\right]^n\right). \label{C(theta)}
\end{equation}
Note also that $N$ is a real, diagonal matrix and hence is Hermitian so it remains to determine under what conditions $\rho_i-N$ is a positive semidefinite matrix for all $i$. 

Let us define the matrices $A_i$ by:
\begin{equation*}
A_i=-N+\rho_i.
\end{equation*}
The goal is to prove that none of the $A_i$ have a negative eigenvalue. Say $A_i$ has eigenvalues $\{a^{r}_{i}\}$ where $a^{1}_{i}\geq a^{2}_{i}\geq\ldots a^{2^n}_{i}$. The matrix $-N$ has eigenvalues $\{v^{r}\}$ where for $1\leq r\leq 2^n-1$:
\begin{align}
v^{r}&=C\left(\theta\right)\left[\tan\left(\frac{\theta}{2}\right)\right]^{|\vec{r}|},
\end{align}
and for $r=2^n$:
\begin{align}
v^{2^n}&=-C\left(\theta\right).
\end{align}
Each $\rho_i$ is a rank 1 density matrix and hence have eigenvalues $u^{1}_{i}=1$ and $u^{r}_{i}=0$ for $2\leq r\leq 2^n$.

By Weyl's inequality:
\begin{align}
v^{r}+u^{2^n}_{i}\leq a^{r}_{i}.
\end{align}
So, provided $C(\theta)>0$, we have $a^{r}_{i}>0$ for $1\leq r\leq2^n-1$. Hence at most one eigenvalue of $A_i$ is non-positive. Investigating this non-positive eigenvalue further, consider $A_i$ acting on the state $\ket{\zeta_i}$:
\begin{align*}
A_i\ket{\zeta_i}&=\rho_i\ket{\zeta_i}-\sum_{j=1}^{2^n}\rho_j\ket{\zeta_j}\bracket{\zeta_j}{\zeta_i}, \\
&=0.
\end{align*}
Hence the non-positive eigenvalue of $A_i$ is 0 implying that $A_i\geq 0$, $\forall i$, which in turn implies that $N\leq\rho_i$, $\forall i$, provided $C(\theta)>0$. As $\left[\cos\left(\theta/2\right)\right]^{2n}\geq 0$, we have shown that $\{\ket{\zeta_{\vec{x}}}\}_{\vec{x}\in\{0,1\}^n}$
, as defined in  Eq. (\ref{PBR Projector})
, is the optimal measurement for exclusion provided:
\begin{equation}
\left(2-\left[1+\tan\left(\frac{\theta}{2}\right)\right]^n\right)>0. \label{PBR Criterion Complement}
\end{equation}
This region is the complement of that given 
in Eq. (\ref{PBR Criterion}) 
so we know the optimal measurement to perform for all values of $n$ and $\theta$.

\subsection{Derivation of how well $\mathcal{M}$ performs at the exclusion task} \label{Derivation of how well M performs at the exclusion task}

Is conclusive exclusion possible in the region defined by Eq. (\ref{PBR Criterion Complement})? To answer this we must consider the trace of the $N$ given in Eq. (\ref{PBR N}):
\begin{equation}
\tr[N]=\frac{1}{2^n}\left[\cos\left(\frac{\theta}{2}\right)\right]^{2n}\left(2-\left[1+\tan\left(\frac{\theta}{2}\right)\right]^n\right)^2.
\end{equation}
This is strictly positive provided and hence conclusive exclusion is not possible. $\tr[N]$ does however, tell us how accurately we can perform state exclusion when we can not do it conclusively.

\section{Alternative State Exclusion SDPs} \label{Alternative State Exclusion SDPs}

Contains:
\begin{itemize}
\item Derivation of unambiguous state exclusion SDP dual.
\item Derivation of worst case error SDP dual.
\end{itemize}

\subsection{Unambiguous State Exclusion SDP} \label{Unambiguous State Exclusion SDP}

In this section the dual problem for the primal SDP for unambiguous state exclusion as given in Eq. (\ref{Unambig Prime}) is derived.

Comparing Eq. (\ref{Unambig Prime}) with Eq. (\ref{Prime}), we see that here:

\begin{itemize}
\item $A$ is a  $kd$ by $kd$ block diagonal matrix with each $d$ by $d$ block containing $\sum_{j=1}^{k}\tilde{\rho}_j$:
\begin{equation}
A=\left(\begin{array}{ccc}
\sum_{j=1}^{k}\tilde{\rho}_j &  & \\
 & \ddots & \\
& & \sum_{j=1}^{k}\tilde{\rho}_j
\end{array}\right).
\end{equation}
\item $B$ is a $(d+k)$ by $(d+k)$ matrix with the top left $d$ by $d$ block being an identity matrix and all other elements being $0$:
\begin{equation}
B=\left(\begin{array}{cc}
\mathbb{I} & 0\\
0 & 0
\end{array}\right).
\end{equation}
\item $X$, the variable matrix, is a $kd$ by $kd$ block diagonal matrix where we label each $d$ by $d$ block diagonal by $M_i$:
\begin{equation}
X=\left(\begin{array}{ccc}
M_1 &  & \\
 & \ddots & \\
& & M_k
\end{array}\right).
\end{equation}
\item $Y$ is a $(d+k)$ by $(d+k)$ matrix whose top left $d$ by $d$ block we call $N$ and the remaining $k$ diagonal elements we label by $a_i$.
\begin{equation}
Y=\left(\begin{array}{cccc}
N & & &\\
& a_1 & &\\
& & \ddots & \\
& & & a_k
\end{array}\right).
\end{equation}
\item The map $\Phi$ is given by:
\begin{equation}
\Phi(X)=\left(\begin{array}{cccc}
\sum_{i=1}^{k}M_i & & &\\
& \tr\left[\tilde{\rho}_1 M_1\right] & &\\
& & \ddots &\\
& & & \tr\left[\tilde{\rho}_k M_k\right]
\end{array}\right).
\end{equation}
\end{itemize}

Using Eq. (\ref{Phi* equation}) we see that $\Phi^*$ must satisfy:
\begin{align}
\tr\left[N\sum_{i=1}^{k} M_i\right]+\sum_{i=1}^{k}a_i\tr\left[\tilde{\rho}_i M_i\right]=\tr\left[\left(\begin{array}{ccc}
M_1 & & \\
& \ddots & \\
& & M_{k}
\end{array} \right)
\Phi^{*}\left[\left(\begin{array}{cccc}
N & & &\\
& a_1 & &\\
& & \ddots & \\
& & & a_k
\end{array}\right)\right]\right],
\end{align}
and hence $\Phi^{*}(Y)$ produces a $kd$ by $kd$ block diagonal matrix:
\begin{equation}
\Phi^{*}(Y)=\left(\begin{array}{ccc}
N+a_1 \tilde{\rho}_1 & & \\
& \ddots & \\
& & N+a_k \tilde{\rho}_k\end{array}\right).
\end{equation}

Substituting these elements into Eq. (\ref{Dual}) and taking into account the fact that we are maximizing rather than minimizing in the primal problem, we obtain the dual SDP as stated in Eq. (\ref{Unambig Dual}).

\subsection{Worst Case Error State Exclusion SDP} \label{Worst Case Error State Exclusion SDP}

In this section the dual problem for the primal SDP for worst case error state exclusion as given in Eq. (\ref{Worst Case Prime}) is derived.

Comparing Eq. (\ref{Worst Case Prime}) with Eq. (\ref{Prime}), we see that here:

\begin{itemize}
\item $A$ is a  $(kd+1)$ by $(kd+1)$ matrix with $A_{11}=1$ being the only non-zero element:
\begin{equation}
A=\left(\begin{array}{cccc}
1 &  & &\\
 & 0 & & \\
& & \ddots &\\
& & & 0
\end{array}\right).
\end{equation}
\item $B$ is a $(d+k)$ by $(d+k)$ where the bottom right $d$ by $d$ block is the identity matrix. All other elements are zero:
\begin{equation}
B=\left(\begin{array}{cc}
0 & 0\\
0 & \mathbb{I}
\end{array}\right).
\end{equation}
\item $X$, the variable matrix, is a $kd+1$ by $kd+1$ block diagonal matrix where $X_{11}=\lambda$ and we label each subsequent $d$ by $d$ block diagonal by $M_i$:
\begin{equation}
X=\left(\begin{array}{cccc}
\lambda & & &\\
& M_1 &  & \\
& & \ddots & \\
& & & M_k
\end{array}\right).
\end{equation}
\item $Y$ is a $(d+k)$ by $(d+k)$ matrix whose bottom right $d$ by $d$ block we call $N$ and the remaining $k$ diagonal elements we label by $a_i$.
\begin{equation}
Y=\left(\begin{array}{cccc}
a_1 & & &\\
&  \ddots & & \\
& &  a_k &\\
& & & N
\end{array}\right).
\end{equation}
\item The map $\Phi$ is given by:
\begin{equation}
\Phi(X)=\left(\begin{array}{cccc}
\lambda-\tr\left[\tilde{\rho}_1 M_1\right] & &\\
& \ddots & &\\
& & \lambda-\tr\left[\tilde{\rho}_k M_k\right] &\\
& & & \sum_{i=1}^{k}M_i
\end{array}\right).
\end{equation}
\end{itemize}

Using Eq. (\ref{Phi* equation}) we see that $\Phi^*$ must satisfy:
\begin{align}
\lambda\sum_{i=1}^{k} a_i - \sum_{i=1}^{k} a_i\tr\left[\tilde{\rho}_i M_i\right]=\tr\left[\left(\begin{array}{cccc}
\lambda & & &\\
& M_1 &  & \\
& & \ddots & \\
& & & M_k
\end{array}\right)
\Phi^{*}\left[\left(\begin{array}{cccc}
a_1 & & &\\
&  \ddots & & \\
& &  a_k &\\
& & & N
\end{array}\right)\right]\right],
\end{align}
and hence $\Phi^{*}(Y)$ produces a $kd$ by $kd$ block diagonal matrix:
\begin{equation}
\Phi^{*}(Y)=\left(\begin{array}{cccc}
\sum_{i=1}^{k} a_i & & &\\
& N-a_1 \tilde{\rho}_1 & & \\
& & \ddots & \\
& & & N-a_k \tilde{\rho}_k\end{array}\right).
\end{equation}

Substituting these elements into Eq. (\ref{Dual}), we obtain the dual SDP as stated in Eq. (\ref{Worst Case Dual}).

\end{document}